\documentclass[11pt,twoside]{article}

\usepackage{verbatim}
\usepackage{mathrsfs}
\usepackage{amsmath,amssymb,amsthm}
\usepackage{color} 
\usepackage{bbm}
\usepackage{latexsym}
\usepackage{dsfont}
\usepackage{tikz}
\usetikzlibrary{positioning, arrows}
\usepackage{times}

\usepackage{graphicx} 
\usepackage{natbib}
\bibpunct[; ]{(}{)}{,}{a}{}{;}

\usepackage{authblk}

	\graphicspath{ {../Figures/}}

\usepackage{url}
\definecolor{bl}{RGB}{20,20,150}

\newcommand{\RBM}{\operatorname{RBM}}

\newcommand{\Bcal}{\mathcal{B}}

\newcommand{\Ecal}{\mathcal{E}}

\newcommand{\Hcal}{\mathcal{H}}
\newcommand{\Mcal}{\mathcal{M}}

\newcommand{\Scal}{\mathcal{S}}

\newcommand{\Ascr}{\mathscr{A}}
\newcommand{\Cscr}{\mathscr{C}}

\newcommand{\Sscr}{\mathscr{S}}
\newcommand{\Wscr}{\mathscr{W}}

\newcommand{\Mfrak}{\mathfrak{M}}

\newcommand{\R}{\mathbb{R}}

\newcommand{\supp}{\operatorname{supp}}

\newcommand{\rank}{\operatorname{rank}}
\newcommand{\Beh}{\operatorname{Beh}}
\newcommand{\aff}{\operatorname{aff}}

\oddsidemargin .25in                          
\evensidemargin .25in
\marginparwidth 0.07 true in
\topmargin -0.5in
\addtolength{\headsep}{0.25in}
\textheight 8.5 true in                         
\textwidth 6.0 true in                          
\widowpenalty=10000
\clubpenalty=10000

\newtheorem{theorem}{Theorem}
\newtheorem{lemma}{Lemma}

\newtheorem{proposition}{Proposition}

\theoremstyle{definition}
\newtheorem{example}{Example}
\newtheorem{definition}{Definition}

\newtheorem{problem}{Problem}

\renewenvironment{abstract}
{\centerline{\large\bf Abstract}\vspace{0.7ex}%
	\bgroup\leftskip 20pt\rightskip 20pt\small\noindent}%
{\par\egroup\vskip 0.25ex}

\newenvironment{keywords}
{\bgroup\leftskip 20pt\rightskip 20pt \small\noindent{\bf Keywords:} }%
{\par\egroup\vskip 0.25ex}

\title{\vspace{-3mm} 
	A Theory of Cheap Control in Embodied Systems
	}

\author[1]{Guido Mont\'ufar} 
\author[1]{Keyan Ghazi-Zahedi} 
\author[1,2,3]{Nihat Ay} 
\affil[1]{\small Max Planck Institute for Mathematics in the Sciences, Inselstra\ss e 22, 04103 Leipzig, Germany}
\affil[2]{\small Department of Mathematics and Computer Science, Leipzig University, 
	04009 Leipzig, Germany }
\affil[3]{\small Santa Fe Institute, 1399 Hyde Park Road, Santa Fe, NM 87501, USA}
\date{}

\usepackage{fancyhdr}
\pagestyle{fancy}
\fancyhead{}
\fancyhead[CE]{\textsc{G. Mont\'ufar, K. Ghazi-Zahedi, N. Ay}}
\fancyhead[CO]{\textsc{A Theory of Cheap Control in Embodied Systems}}

\begin{document}

\thispagestyle{empty}
\maketitle

\begin{abstract}
We present a framework for designing cheap control architectures for embodied agents. Our derivation is guided by the classical problem of universal approximation, whereby we explore the possibility of exploiting the agent's embodiment for a new and more efficient universal approximation of behaviors generated by sensorimotor control. This embodied universal approximation is compared with the classical non-embodied universal approximation. To exemplify our approach, we present a detailed quantitative case study for policy models defined in terms of conditional restricted Boltzmann machines. In contrast to non-embodied universal approximation, which requires an exponential number of parameters, in the embodied setting we are able to generate all possible behaviors with a drastically smaller model, thus obtaining cheap universal approximation. We test and corroborate the theory experimentally with a six-legged walking machine. The experiments show that the sufficient controller complexity predicted by our theory is tight, which means that the theory has direct practical implications.
\end{abstract}

\begin{keywords}
cheap design, embodiment, sensorimotor loop, universal approximation, conditional restricted Boltzmann machine
\end{keywords}

\section{Introduction}
In artificial intelligence, learning is one of the central fields of interest.
Crucial for the success of any learning method is the complexity of the
underlying model, e.g.~a neural network. If the model is chosen too complex, the
learning algorithm will likely require too much time and get stuck in a
suboptimal solution. If it is chosen too simple, it might not be able to solve
the problem at all. It is known from biological systems, that the exploitation
of the body and environment allows a reduction of the neural system's complexity
\citep{Pfeifer2006aHow-the-Body}.

The goal of this article is to provide a framework that allows to determine the complexity
of a control architecture in accordance with the \emph{cheap
design} principle from embodied artificial intelligence~\citep{Pfeifer2006aHow-the-Body}. 
Cheap design in this context refers to the relatively low complexity of the brain or
controller in comparison with the complexity of an observed behavior.
A classical example is given by the Braitenberg
vehicles~\citep{Braitenberg1984aVehicles}, which are \emph{Gedankenexperiments}
designed to show how a seemingly complex behavior can result from very simple
control structures. Braitenberg discusses several artificial
creatures with simple wirings between sensors and actuators. 
He then describes how these systems produce a behavior that an external observer
would classify as complex if the internal wirings were not revealed. Most interestingly, he then
relates the wiring of his vehicles to various neural structures in the human brain. The idea
of a simple wiring that leads to complex behaviors is also discussed by~\citet{Pfeifer2006aHow-the-Body}, who present the walking behavior of an ant as an example. 
Without taking the embodiment and, in particular, the sensorimotor loop into
account, the complex behavior (of a complex morphology) seems to require a
complex control structure~\citep[p.~79]{Pfeifer2006aHow-the-Body}.
A strong indication that cheap design is a common principle in biological
systems is given by the fact that the human brain accounts for only 2\% of the
body mass but is responsible for 20\% of the entire energy consumption
\citep{Clark1999aCirculation}, which is also remarkably constant
\citep{SOKOLOFF1955aEFFECT}.
Further support for cheap design as a common principle is given by a recent
study on the brain sizes of migrating birds. It is known that migrating birds
have a reduced brain size compared with their resident relatives.
\citet{Sol2010aEvolutionary} have studied various species and the affected brain
regions and point out that the reduced brain sizes could be a direct result from
the need to reduce energetic, metabolic and cognitive costs for migrating
birds. 

One way to achieve cheap design in this context is described as
\emph{compliance} in the embodied artificial intelligence community. A system is
described as compliant, if it not only copes with the hard physical constraints
it is subject to, but if it exploits them in order to minimize the required
control effort. An illustrative example is the human walking behavior, which
only needs to be actively controlled during the stance phase. The swing phase
results mainly from the interaction of the physical properties of the leg with
the environment (gravity). This is demonstrated by the Passive Dynamic
Walker~\citep{McGeer1990aPassive}, which is a purely mechanical system that
resembles the physical properties of human legs. The human walking behavior is
emulated as a result of the interaction of the mechanical system with its
environment (gravity and a slope). It is an impressive example of cheap design
that requires no active control \mbox{at all}.

We are interested in quantifying to what extent a control structure can be
reduced if the physical constraints are taken into account. Above, we referred
to a system as cheaply designed, if it has a control structure of low complexity
produces behaviors which an external observer would classify as complex. In this
work, we are not concerned with the complexity of the behavior. Instead,
we present an approach to determine the minimal complexity of a control
structure that is able to produce a given set of desired behaviors (which can
also be all theoretically possible behaviors) with a given morphology in a given
environment. In other words, rather than comparing the complexities of the
control structure and the behavior, we ask: what is the minimal brain complexity
(or size) that can control all (desired) behaviors that are possible with the
body and environment in which it is embedded? 

There are various different complexity measures available in literature, of
which the predictive information \citep{Bialek1999aPredictive}, relevant
information \citep{Polani2006aRelevant}, and the Kolmogorov complexity
\citep{Schmidhuber2009aDriven}, are just a few examples.  All these approaches
have their specific strengths. However, they do not explicitly quantify how much
the controller complexity can be reduced as a result of the agent’s embodiment,
which is the focus of this work.

We follow a bottom-up, understanding by building
approach~\citep{Brooks1991aIntelligence} to cognitive science, which is also
known as behavior-based robotics~\citep{Brooks1991bIntelligence} and embodied
artificial intelligence~\citep{Pfeifer2006aHow-the-Body}. The core concept is
that cognitive systems are considered as embedded and situated agents which
cannot be understood if they are detached from the sensorimotor loop. This
implicitly means that we assume sensor state sparsity and continuity of physical
constraints. Consider the human retina as an example. We do not see random images
but structured patterns and the sequence of these patterns is also highly
dependent on our behavior. This behavior-dependent structuring of information is
also know as \emph{information self-structuring} and it has been identified as
one of the key principles of learning and
development~\citep{Lungarella2005aInformation,Pfeifer2007aSelf-Organization}.
The second implication from the sensorimotor loop is continuity, e.g.~natural
systems are unable to teleport themselves from one place to another. Therefore,
we can safely assume that the world around us will not be too different from the
recent past and the recent future.

The sensorimotor loop
(SML)~\citep{klyubin2004tracking,Ay2014On-the-causal-structure-of-the-sensorimotor}
is described by a type of partially observable Markov decision process (POMDP)
where an embodied agent chooses actions based on noisy partial observations of
its environment. An illustration of this causal structure is given in
Figure~\ref{figure:SML}. We aim at optimizing the design of policy models for
controlling these processes. One aspect of the optimal design problem is
addressed by working out the optimal complexity of the policy model. In
particular, we are interested in the minimal number of units or parameters
needed in order to obtain a network that can represent or approximate a desired
set of behaviors within a given degree of accuracy. A first step towards
resolving this problem is to address the minimal size of a universal
approximator. In realistic scenarios, universal approximation is out of
question, since it demands an enormous number of parameters -- many more than
actually needed. In this paper we reconsider the universal approximation problem
by exploiting embodiment constraints and restrictions in the desired behavioral
patterns. 

We introduce the notions of {\em embodied behavior dimension} and {\em embodied
universal approximation}, which quantify the effective dimension of a system
that is subject to sensorimotor constraints (embodiment) and formalize the
minimal control paradigm of cheap design in the context of the sensorimotor
loop. We substantiate these ideas with theoretical results on the
representational capabilities of conditional restricted Boltzmann machines
(CRBMs) as policy models for embodied systems. CRBMs are artificial stochastic
neural networks where the input and output units are connected bipartitely and
undirectedly to a set of hidden units. Given the embodied behavior dimension, we
derive bounds on the number of hidden units of CRBMs, that suffices to generate
all possible behaviors by appropriate tuning of interaction weights and biases.
In order to test our theory, we present an experimental study with a six-legged
walking robot, and find a clear corroboration of our theorems. The experiments
show that the sufficient controller complexity predicted by our theory is tight,
which means that the theory has direct practical implications.

CRBMs are defined by clamping an {\em input} subset of the visible units of a 
Restricted Boltzmann machine (RBM)~\citep{Smolensky1986,freund1994unsupervised}. 
Conditional models of this kind have found a wide range of applications, e.g., in classification, collaborative filtering, and motion modeling~\citep[see][]{LarochelleB08,Salakhutdinov:2007:RBM,sutskever_hinton_07,Taylor06modelinghuman}, and have proven useful as policy models in reinforcement learning settings~\citep{Sallans:2004:RLF:1005332.1016794}. 
These networks can be trained efficiently~\citep{Hinton:2002:TPE:639729.639730,Hinton:practical} and are well known in the context of learning representations and deep learning~\citep[see][]{BengioLearning}. 
Although estimating the probability distributions represented by RBMs is hard~\citep{LongServedio10}, approximate samples can be generated easily from a finite Gibbs sampling procedure. The theory and in particular the expressive power of RBM probability models has been studied in numerous papers~\citep[e.g.,][]{LeRoux:2008:RPR:1374176.1374187,Montufar2011, NIPS2011_0307,NIPS2013_5020}. 
Recently the representational power of CRBMs has been studied in detail~\citep{montufar2014CRBMs}. 
CRBMs can model non-trivial conditional distributions on high-dimensional input-output spaces using relatively few parameters, and their complexity can be adjusted by simply increasing or decreasing the number of hidden units. 
Hence we chose this model class for illustrating our discussion about the complexity of SML control problems.

This paper is organized as follows. 
Section~\ref{section:preliminaries} contains definitions around the SML. 
Section~\ref{section:embodied} presents the notions of embodied behavior dimension and embodied universal approximation, 
which we use to quantify and enforce dimensionality reduction. 
Section~\ref{section:theory} contains our theoretical discussion on the representational power of CRBM models, 
comparing the non-embodied and the embodied settings, and pointing the role of the embodied behavior dimension. 
Section~\ref{section:experiments} puts the theory to the test in a robot control problem.  
Section~\ref{section:discussion} offers our conclusions and outlook. 
The Appendix contains technical proofs and details about possible generalizations of the discussion presented in the main part of the paper. 

\section{The Causal Structure of the Sensorimotor Loop} \label{section:preliminaries}

What is an embodied agent? In order to develop a theory of embodied agents that allows us to cast the core principles of the field of embodied intelligence into rigorous theoretical and quantitative statements, we need an appropriate formal model. Such a model should be  general enough to be applicable to all kinds of embodied agents, including natural as well as artificial ones, and specific enough to capture the essential aspects of embodiment. How should such a model look like? First of all, obviously, an embodied agent has a body.  This body is situated in an environment with which the agent can interact, thereby generating some behavior. In order to be useful, 
this behavior has to be guided or controlled by the agent's brain or controller. Drawing the boundary between the brain on one side and the body, together with the environment, on the other side suggests a black box perspective of the brain. The brain receives sensor signals from and sends effector or actuator signals to the outside world. All it knows from the world is based on this closed loop of signal transmission. In other words, the world is a black box for the brain with which it interacts through sensing and acting. 
\begin{figure}[h]
  \begin{center}
	\includegraphics[clip=true, trim=0cm .3cm 0cm 0cm, scale=.5]{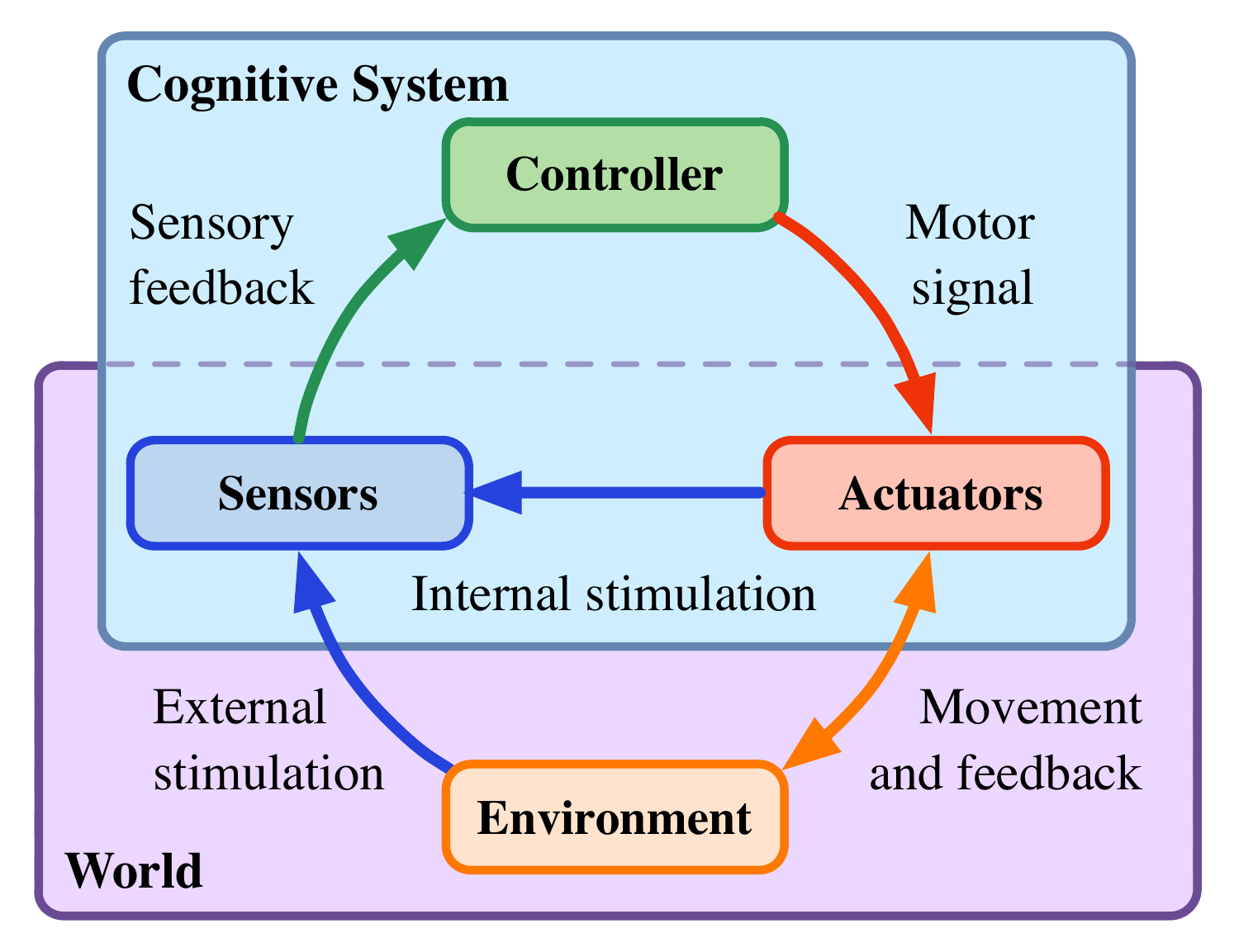}
  \end{center}
  \caption{Sensorimotor loop}
  \label{fig:sml_concept}
\end{figure}
In particular, the boundary between the body and the environment is not directly ``visible'' for the brain. Both are parts of that black box and interact with the brain in an entangled way. Therefore, we consider them as being one entity, the outside world or simply the 
world. The brain is causally independent of the world, given the sensor signals, and the world is causally  independent of the brain, given the actuator signals. This is the black box perspective.  

Let us now develop a formal description of this sensorimotor loop. We denote the set of world states by $\mathscr{W}$. This set can be, for instance, the position of a robot in a static 3D environment. 
Information from the world is transmitted to the brain through sensors. Denoting the set of sensor 
states by $\mathscr{S}$, we can consider the sensor to be an information transmission channel  
from $\mathscr{W}$ to $\mathscr{S}$ as it 
is defined within information theory. Given a world state $w \in \mathscr{W}$, the response of the sensor can be characterized by a probability distribution of possible sensor states $s \in \mathscr{S}$ as result of $w$. For instance, if the sensor is noisy, then its response will not be uniquely determined. If the sensor is noiseless, that is, deterministic, then there will be only one sensor state as possible response to the world state $w$. In any case, the response of the sensor given 
$w$ can be described in the following way: for a set $\mathsf{S}$ of sensor
states we simply say how likely it is that the sensor will respond with a sensor
state $s$ that is contained in $\mathsf{S}$. Formally, we can express this
likelihood by a number $\beta(w; \mathsf{S})$ between zero and one, which is the
probability of $\mathsf{S}$ given $w$. Collecting these numbers for all world states $w
\in \mathscr{W}$ and sets $\mathsf{S}$ leads to the mathematical definition of a channel,
also called a Markov kernel. It can be summarized as a map 
\[
    \beta: \; \Wscr \; \longrightarrow \;  \Delta_{\Sscr} \, ,
\]
where $\Delta_{\Sscr}$ denotes the set of probability distributions on the set $\Sscr$ of sensor states. 
The set of all such sensor channels is denoted 
by $\Delta^{\Wscr}_{\Sscr}$. Given a sensor channel $\beta$, 
there is another way to represent the probability distribution that is assigned
to a world state $w$. Instead of providing a list $\beta(w; \mathsf{S})$ for all
sets $\mathsf{S}$
of interest, we can 
restrict attention to infinitesimally small sets $ds$, leading to the notation
$\beta(w; ds)$. In order to represent $\beta(w;\mathsf{S})$
in this notation, we have to integrate over all the infinitesimal $ds$ in a set
$\mathsf{S}$, that is
\[
  \beta(w; \mathsf{S}) \; = \; \int_\mathsf{S} \beta(w ; ds) \, .
\] 
Whenever the base set $\Sscr$ is discrete, we simply replace $ds$ by $s$
and use $\beta(w; s)$ instead of $\beta(s; ds)$.
Note that, as a Markov kernel, 
$\beta$ has to satisfy various conditions. In order to provide a 
mathematically rigorous treatment, we assume that these conditions are
satisfied. However, in order to improve the readability of the paper, we will
not be very explicit with this~\citep[for the technical definitions see,
e.g.,][]{bauer1996probability}. 

After having described in detail the mathematical model of a sensor, it is now
straightforward to consider corresponding formalization of the other components
of the sensorimotor loop. We continue with the notion of a policy. The agent can
generate an effect in the world in terms of its actuators. Since we consider the
body as part of the world, this can lead, for instance, to some body movement of
the agent. In order guide this movement, it is beneficial for the agent to
choose its actuator state based on the information about the world received
through its sensors. Denoting the state set of the actuators by $\mathscr{A}$,
we can again consider a channel from $\Sscr$ to $\mathscr{A}$ as formal model of
a policy, which we denote by $\pi$. Being more precise, with a sensor state $s$
and a subset $\mathsf{A}$ of actuator states, $\pi(s; \mathsf{A})$ denotes the probability that
the agent chooses an actuator state in $\mathsf{A}$, given that its sensor state is $s$.
Again, we have a Markov kernel \[ \pi: \; \Sscr \; \longrightarrow \;
\Delta_{\Ascr} \, , \] where $\Delta_{\Ascr}$ denotes the set of probability
distributions on $\Ascr$. We also use the notation $\pi(s; da)$ for an
infinitesimal set $da$ as we already introduced above in the context of $\beta$.
Note that this definition of a policy allows us to also consider a random choice
of actions, so-called non-deterministic policies. The set of policies is denoted
by $\Delta^{\Sscr}_{\Ascr}$.    \\

Finally, we consider the change of the world state from $w$ to $w'$ in the
context of an actuator state $a$ as a channel which we 
denote by $\alpha$. More precisely, given 
a world state $w$, an actuator state $a$, and a set $\mathsf{W'}$ of world states,
$\alpha(w,a; \mathsf{W'})$ denotes the probability that the actuator 
state $a$ will generate a transition from $w$ to a new world state that is in
$\mathsf{W'}$. As for the other channels, we use also in this case the notation  
$\alpha(w,a; dw')$ for infinitesimally small sets $dw'$. With the set $\Delta_{\Wscr}$ of probability distributions 
on $\Wscr$, we have 
 \[
    \alpha: \; \Wscr \times \Ascr \; \longrightarrow \;  \Delta_{\Wscr} \, .
\]   
We refer to $\alpha$ as {\em world channel\/} and denote the set of all world channels by $\Delta^{\Wscr \times \Ascr}_{\Wscr}$. \\

We have defined three mechanisms that are involved in a (reactive) sensorimotor loop of an embodied agent. Clearly, the agent's embodiment poses constraints to this loop, which we attribute to the mechanisms $\beta$ and $\alpha$. The agent is equipped with these mechanisms, but they are both considered to be determined and not modifiable by the agent. On the other hand, the 
policy $\pi$ can be modified by the agent in terms of learning processes.  
In order to describe the process of interaction of the agent with the world, we have to sequentially apply the individual mechanisms in the right order. Starting with an initial world state $w^t$ at time $t$, first the sensor state $s^t$ is generated in terms of the channel $\beta$. Then, based on the state of the sensor, an actuator state $a^t$ is chosen according to the policy $\pi$. Finally, the world makes a transition, 
governed by $\alpha$, from the state $w$ to a new state $w^{t + 1}$, which is influenced by the actuator state $a^t$ of the agent. 
Altogether, this defines the combined mechanism
\begin{equation} \label{integrated}
     {\Bbb P}^\pi(w^t; ds^t, da^t,dw^{t+ 1}) \, := \, \beta(w^t; ds^t) \, \pi(s^t; da^t) \, \alpha(w^t,a^t; dw^{t + 1}) \, .
\end{equation}
Note that we consider $\beta$ and $\alpha$ fixed and therefore emphasise only the dependence on $\pi$.
Now, with the new state $w^{t + 1}$ of the world, the three steps are iterated. 
This generates a process which is shown in Figure \ref{figure:SML}.   
Formally, the process is a probability distribution over trajectories that start with $w^0$:
 \begin{equation} \label{sequence}
      w^0, \quad s^0, a^0, w^1,\quad  s^1, a^1, w^2, \quad s^2, a^2, w^3, \quad \dots, \quad  s^{T-1}, a^{T-1}, w^{T} \, .
 \end{equation}
In order to describe this probability distribution, we have to iterate the mechanism (\ref{integrated}) by multiplication:
\[
   {\Bbb P}^\pi(w^0; ds^0, da^0, dw^1, \dots, ds^{T-1},da^{T-1}, dw^{T}) \; := \; \prod_{t = 0}^{T-1} {\Bbb P}^\pi(w^t; ds^t, da^t,dw^{t+ 1}) \, .
\]

\begin{figure}[h]
	\centering
	\includegraphics[clip=true,trim=0cm .5cm 0cm 0cm, scale=1.14]{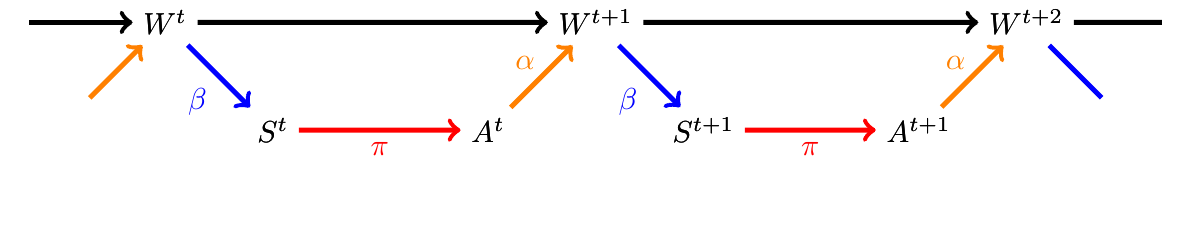}
	\caption{Causal structure of the reactive SML. The gray nuance groups the variables in one time step. } 
	\label{figure:SML}
\end{figure}

Now, what aspects of the sequence (\ref{sequence}) represent the behavior of the agent? Let us consider, for instance, a walking behavior. It is given as a movement of the agent's body in physical space, which is completely determined by the world process. Remember that the body is part of the world. Clearly, the particular sequence of sensor and actuator states 
does not matter as long as they contribute to the generation of the same body movement. Therefore, we consider the world process 
$w^t$ as the one in which behavior takes place and integrate out the other processes:
\begin{eqnarray}
     {\Bbb P}^\pi(w^0; dw^1, \dots, dw^{T})  & = &        
                       \underbrace{\int_{\Ascr} \int_{\Sscr}  \cdots  \int_{\Ascr} \int_{\Sscr}}_{\mbox{\small $T$ times}}  
                        {\Bbb P}^\pi(w^0; ds^0, da^0, dw^1, \dots, ds^{T-1},da^{T-1}, dw^{T}) \nonumber \\
                        & = &  \prod_{t = 0}^{T-1} {\Bbb P}^\pi(w^t; dw^{t+ 1}) \, .           \label{behavior}
\end{eqnarray}  
One can show that, with weak assumptions, the limit for $T \to \infty$ exists, so that we can write
\[    
   {\Bbb P}^\pi(w^0; dw^1, dw^2, \dots) \,,
\]   
which is a Markov kernel from an initial world state $w^0$ to the space of all infinite future sequences $w^1,w^2, \dots$. We denote the 
set of these Markov kernels by $\Delta^{\Wscr}_{\Wscr^\infty}$. This allows us to formalize the map that assigns to each policy the 
corresponding behavior:
\begin{equation}
    \psi_\infty: \Delta^{\Sscr}_{\Ascr} \;\; \longrightarrow \;\; \Delta^{\Wscr}_{\Wscr^\infty}, \qquad \pi \;\; \longmapsto \;\; {\Bbb P}^\pi(w^0; dw^1, dw^2, \dots)\,. 
\end{equation} 
We refer to this map as the {\em policy-behavior map\/}. 
Two policies $\pi_1$ and $\pi_2$ will be considered equivalent, if they generate the same behavior, that is, 
\begin{equation} \label{equivalent}
    \psi_\infty({\pi_1}) \; = \;  \psi_\infty({\pi_2}) \, .
\end{equation}
We argue that embodiment constraints render many equivalent policies. 
We can exploit this fact in order to design a concise control architecture. 
This will lead to a quantitative treatment of the notion of cheap design within the field of embodied intelligence. 
Let us treat this systems design problem in a more rigorous way.
As we pointed out, the agent is equipped with the mechanisms $\beta$ and $\alpha$ which constitute the embodiment of the agent. 
 In a biological system these mechanisms will change due to developmental processes. 
 However, we want to restrict our attention to the learning processes and disentangle them from developmental processes by assuming that the latter ones have already converged and therefore consider them as fixed. 
 Learning refers to a process in which the policy is changing in time. 
 Clearly, in order to model this change the agent has to be equipped with a family of possible policies, which we denote by ${\mathcal M}$, and refer to as {\em policy model}. 
For instance, we can consider neural networks as policy models that are parametrized by synaptic weights and threshold values for the individual neurons.  
Changing the weights and the thresholds 
will lead to a change of the policy (although there may be degeneracies, in general). 
In any case, going through all the possible parameter values will generate a set ${\mathcal M}$ of policies with which the agent is equipped for its behavior.

We argue that, if the embodiment constraints lead to many equivalent policies, then it is possible to find a concise model ${\mathcal M}$ that is capable of generating all behaviors. More precisely, we will require from a model
\[
      \psi_\infty({\mathcal M}) \; = \; \psi_\infty(\Delta^{\Sscr}_{\Ascr} ) \, ,
\]            
or, more precisely, a slight modification by taking limit points of ${\mathcal M}$ into account. 
We refer to this property of ${\mathcal M}$ as being an {\em embodied universal approximator}. In order to highlight the exploitation of embodiment constraints for cheap design, we compare this kind of universal approximation to the standard notion of universal approximation, which we refer to as {\em non-embodied universal approximation\/}.   \\

\section{Cheap Representation of Embodied Behaviors}
\label{section:embodied}

Intuitively it is clear that the embodiment constraints cause restrictions in the set of behaviors that an agent can realize. 
For example, inertia restricts the pace at which an embodied system can change its direction of motion (imagine a train switching the traveling direction instantaneously). 
In turn, not all world-state transitions may be possible in a single time step, regardless of what the policy specifies as a desirable action to take. 
These restrictions create a bottleneck between the set of policies on the one side and the set of possible behaviors on the other. 
The consequence is that, generically, infinitely many policies parametrize the same behavior. 
If we understand the way in which different policies are mapped to the same, or to different, behaviors, 
then we can parametrize all the behaviors that can possibly emerge in the SML by a low-dimensional (or low-complexity) set of policies. 
We develop the necessary tools in this section. 
For clarity we will focus on the reactive SML with finite sensor and actuator state spaces but allowing the possibility of a continuous world state. 
In particular we will use $\beta(w;s)$ instead of $\beta(w;ds)$ and $\pi(s;a)$ instead of $\pi(s;da)$. 
Possible generalizations of these settings are discussed in Appendix~\ref{sec:generalizations}. \\

The condition (\ref{equivalent}) is clearly the same as
\[
{\Bbb P}^{\pi_1}(w^0; dw^1, \dots, dw^{T}) \; = \;  {\Bbb P}^{\pi_2}(w^0; dw^1, \dots, dw^{T}), \qquad \mbox{for all $T = 1,2,\dots$}, 
\]
and with equation (\ref{behavior}), this is satisfied if and only if 
\[
{\Bbb P}^{\pi_1}(w; dw') \; = \;   {\Bbb P}^{\pi_2}(w; dw') \, . 
\]
Therefore, the mechanism ${\Bbb P}^\pi(w; dw')$ will play an important role in our analysis, and we consider the one-step formulation of the 
policy-behavior map:
\begin{equation}
\psi: \Delta^{\Sscr}_{\Ascr} \;\; \longrightarrow \;\; \Delta^{\Wscr}_{\Wscr}, \qquad \pi \;\; \longmapsto \;\; {\Bbb P}^\pi(w; dw' )\,,  
\end{equation} 
where 
\begin{equation}
\mathbb{P}^\pi(w;dw') = \sum_{s\in\Sscr}\sum_{a\in\Ascr}\beta(w;s) \pi(s;a) \alpha(a,w; dw'). \label{eq:realizable}
\end{equation}
This is an affine map from the convex set $\Delta^\Sscr_\Ascr$ to the convex set $\Delta^\Wscr_\Wscr$. 
The image of this map represents the set of all possible behaviors that the SML can generate. 
We denote this set by $\Beh$ and refer to its dimension the {\em embodied behavior dimension} $d = \dim(\psi(\Delta^\Sscr_\Ascr))$. 
This is given by the number linearly independent vectors 
\begin{equation}
\beta(w;s)(\alpha(w,a_0; dw') -\alpha(w,a;dw'))
\end{equation}
in $\aff(\psi(\Delta^\Sscr_\Ascr))$, where $(s,a)$ are sensor-actuator states with $a\neq a_0$, for some fixed $a_0\in\Ascr$. 
These vectors are namely the images of a basis of $\aff(\Delta^\Sscr_\Ascr)$. 
See Appendix~\ref{app:proofs} for more details about this. 
This allows us to formulate a simple upper bound for the embodied behavior dimension $d$ in terms of the image dimension of the maps $\beta$ and $\alpha$. 
Writing $\rank$ for the image dimension and regarding $\beta$ as a linear map $\mathbb{R}^\Sscr \to \mathbb{R}^\Wscr$ (operating on the columns of $\pi$) and $\alpha$ as an affine map $\Delta_\Ascr \to \Delta^\Wscr_\Wscr$ (operating on the rows of $\pi$), 
we have  
\begin{equation}
d =\rank(\psi) \leq \rank(\beta) \cdot \rank(\alpha). 
\label{eq:upperbounddim}
\end{equation}
For example, if $\Wscr$ is finite, 
$\rank (\beta)$ is the rank of the matrix with entries $(\beta(w;s))_{w\in\Wscr, s\in\Sscr}$ and $\rank (\alpha)$ is the rank of the matrix with entries $(\alpha(w,a_0;w') -\alpha(w,a;w') )_{a\in\Ascr, (w,w')\in\Wscr\times\Wscr}$, for some fixed $a_0\in\Ascr$. 
Even though the upper bound~\eqref{eq:upperbounddim} does not necessarily hold with equality (the rank of the combined map $\psi$, where $\beta$ and $\alpha$ share the same $w$, can be significantly smaller than the product of the individual ranks), it gives us a clear picture of how the embodiment constraints, represented by $\beta$ and $\alpha$, can lead to an embodied behavior dimension $d$ that is much smaller than $|\Sscr|(|\Ascr|-1)$, the dimension of the set of all policies $\Delta^\Sscr_\Ascr$.

\begin{figure}
	\centering
	\includegraphics[width=7cm]{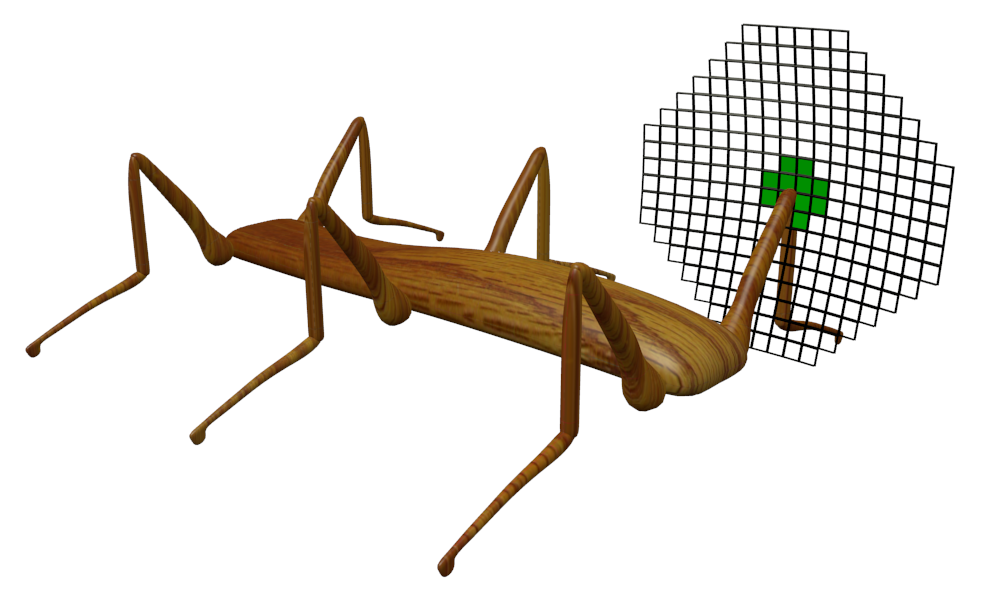}
	\caption{Locality of world-state transitions. This hexapod will be used in the experimental evaluation of our theory in Section~\ref{section:experiments}. }
	\label{figure:grid}
\end{figure}

In an embodied system the sensors are usually insensitive to a large number of variations of the world-state $w$ and, therefore, $\beta$ is piece-wise constant with respect to $w$. 
Also, the sensors implement a certain degree of redundancy, 
meaning that, for each $w$, the probability distribution $\beta(w;\cdot)\in \Delta_\Sscr$ has certain types of symmetries. 
This means that $\rank(\beta)$ is much smaller than $|\Sscr|$ (the maximum theoretically possible rank). 
In the case of $\alpha$, 
usually several actions $a$ produce the same world-state transition, 
such that, for any fixed world state $w$, $\alpha(w,\cdot;\cdot)$ is piece-wise constant with respect to $a$. 
Furthermore, for any given $w$, only very few states $w' \in \Wscr$ are possible at the next time step, 
regardless of $a$, such that $\alpha(w,a;\cdot)$ assigns positive probability only to a very small subset of $\Wscr$. 
This means that $\rank(\alpha)$ is usually much smaller than $(|\Ascr|-1)$ (the maximum theoretically possible rank). 
An example for this kind of constraints on $\alpha$ is a robot's knee, which in a time step can only be moved to adjacent positions, as the one shown in Figure~\ref{figure:grid}. 
\\

So far we have discussed the embodied behavior dimension of an embodied system $d=\dim(\Beh)$ 
and why this can be much smaller than the dimension of the policy space $\dim(\Delta^\Sscr_\Ascr)=|\Sscr|(|\Ascr|-1)$. 
Since the policy-behavior map $\psi$ is affine, for any generic behavior that can possibly emerge in the SML, there is a $(|\Sscr|(|\Ascr|-1) -d)$-dimensional set (in fact a polytope) of equivalent policies generating that same behavior. 
By selecting representatives from each set of equivalent policies, 
we can define low-dimensional policy models which are just as expressive as the much higher dimensional set $\Delta^\Sscr_\Ascr$ of all possible policies, in terms of the representable behaviors. 
The following example shows that it is possible to define a smooth manifold of policies which 
translate in a one-to-one fashion to the set of all possible behaviors in the SML. 

\begin{example} 
	\label{example:mineua}
	Consider the matrix $E\in\R^{d\times (\Sscr\times\Ascr)}$ that represents the policy-behavior map $\psi$ with respect to some basis. 
	Then the exponential family $\Ecal^\Sscr_\Ascr$ of policies defined by 
	\begin{equation}
		\pi_\theta(s;a) = \frac{\exp( \theta^\top E (s,a)  )}{\sum_{a'\in\Ascr} \exp( \theta^\top E (s,a')  )} ,\quad \text{for all $a\in\Ascr$ and $s\in\Sscr$},\quad \text{for all $\theta\in\R^d$},\label{eq:mineua}
	\end{equation}
	is an embodied universal approximator of dimension $d$. 
	In fact, each behavior from the set $\Beh$ is realized by exactly one policy from the set $\overline{\Ecal^\Sscr_\Ascr}$.  
	See Figure~\ref{fig:expfamcond} for an illustration and Appendix~\ref{app:proofs} for technical details. 
\end{example}

\begin{figure}
	\centering
\includegraphics[clip=true, trim=.73cm 0cm .47cm 0cm, scale=1]{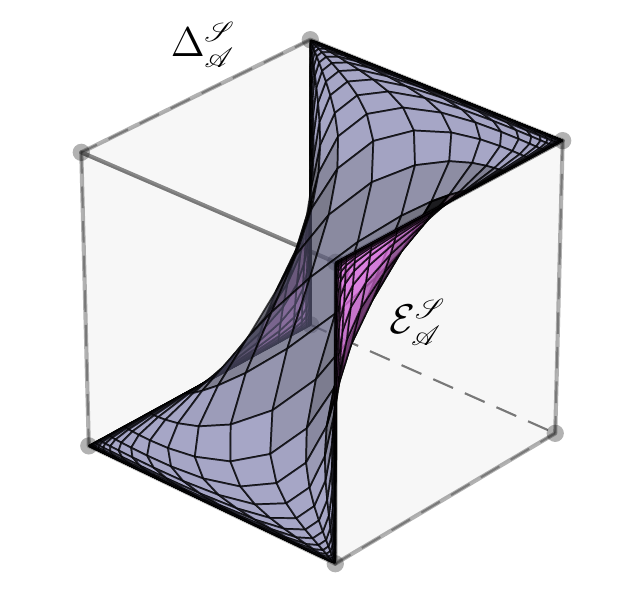}
	\caption{
		Illustration of the exponential family of policies described in Example~\ref{example:mineua}. 
		Here $|\Sscr|=3$ and $|\Ascr|=2$, such that the polytope $\Delta_{\Ascr}^{\Sscr}$ is the three-dimensional cube of $3\times 2$ row stochastic matrices. 
		The curved surface is the exponential family $\Ecal^\Sscr_\Ascr$ from the example for some specific $\beta$ and $\alpha$ with embodied behavior dimension $d=2$. 
		In the SML defined by $\beta$ and $\alpha$, the policies from $\Ecal^\Sscr_\Ascr$ generate the same behaviors as the set of all policies~$\Delta_\Ascr^\Sscr$.  \label{fig:expfamcond}
	}
\end{figure}

The previous discussion shows that the set of behaviors that can possibly emerge in the SML 
usually has a much lower dimension than the set of all possible policies. Furthermore, it shows that it is possible to construct low-dimensional embodied universal approximators. 
Nonetheless, among all behaviors that are possible in the SML, 
we can expect that only a smaller subset $\Bcal\subset\Beh$ is actually relevant to the agent. 
For instance, among all locomotion gaits that an agent could possibly realize with its body in a given environment, 
we can expect that it will only 
utilize those which are most successful (with respect to different tasks). 
The embodied behavior dimension can be directly generalized in order to capture such behavioral restrictions, in addition to the embodiment constraints $\alpha$ and $\beta$. 
We note that specifying a set $\Bcal$ of potentially interesting behaviors is in general not an easy task.  
For instance, one could be interested in behaviors that maximize the expected reward of an agent (besides from implicit definitions, say as the maximizers of a given objective function). 
As the main focus of this paper is not to find a good definition of interesting or natural behaviors, 
here we will consider a very simple but general classification of behaviors, in terms of their supports. 

In relation to the information self-structuring mentioned in the introduction, for specific behaviors usually only a relatively small subset $\Scal\subseteq\Sscr$ of sensor values emerges. 
In such situations, the policy only needs to be specified for the sensor values in the subset $\Scal$. 
\newcommand{\Wcal}{\mathcal{W}}
Consider a set of behaviors $\Bcal$ that take place within a restricted set of world states 
$\Wcal \subseteq \Wscr$ and consider the set of sensor states that can be possibly measured from these world states,  
$\Scal := \{s\in \Sscr \colon s \in \supp (\beta(w; \cdot)) \text{ for some $w\in \Wcal$} \}$. 
For the  world states $\Wcal$, the measurement by $\beta$ always produces sensor values in $\Scal$, 
and the policy for states not in $\Scal$ does not play any role. 
We denote by $\psi^\Scal$ the restriction of the policy-behavior map to $w\in \Wcal$. 
This is given by the natural restriction of the kernels $\beta$ and $\alpha$ to the domain $\Wcal\subseteq\Wscr$. 
In this case, the embodied behavior dimension is given by $d^{\Scal}  := \dim(\psi^\Scal(\Delta^\Sscr_\Ascr))$. 
We will denote a model $\Mcal\subseteq\Delta_\Ascr^\Sscr$ an embodied universal approximator on $\Scal$ if 
$\psi^\Scal(\overline{\Mcal}) = \psi^\Scal(\Delta^\Sscr_\Ascr)$. 
This definition means that the model is powerful enough to control any behavior on $\Wcal$ just as well as the entire policy polytope. 
Given any set of behaviors $\Bcal$, e.g., as the one described above, we are interested in the following problem. 

\begin{problem}
	\label{problem}
	For a given set of possible behaviors $\Bcal\subseteq\Beh = \psi(\Delta^\Sscr_\Ascr)$ and a class of policy models $\Mfrak$, 
	what is the smallest model $\Mcal\in\Mfrak$ that can generate all these behaviors, such that  $\Bcal\subseteq \psi(\overline{\Mcal})$? 
\end{problem}

Of particular interest are classes of policy models $\Mfrak$ defined in terms of neural networks. 
In Section~\ref{section:theory} we will consider a class of policy models defined in terms of CRBMs. 
The following result gives us a simple and powerful combinatorial tool for addressing Problem~\ref{problem}. 

\begin{lemma} 
	\label{proposition:euasupp}
	Any model $\Mcal\subseteq\Delta_\Ascr^\Sscr$ with the following property is an embodied universal approximator on $\Scal$: 
	for every policy $\pi\in\Delta_\Ascr^\Sscr$ whose $\Scal$-rows have a total of $|\Scal|+d^{\Scal}$ or less non-zero entries, 
	there exists a policy $\pi^\ast\in\overline{\Mcal}$ with $\pi(s;\cdot)=\pi^\ast(s;\cdot)$ for all $s\in\Scal$. 
\end{lemma}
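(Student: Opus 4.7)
The plan is to reduce the problem to a sparsity statement about vertices of fibers of the policy-behavior map, and then quote the hypothesis on $\Mcal$. Fix any behavior $P\in\psi^\Scal(\Delta^\Sscr_\Ascr)$; the goal is to exhibit $\pi^\ast\in\overline{\Mcal}$ with $\psi^\Scal(\pi^\ast)=P$. The very first observation I would record is that $\psi^\Scal$ depends only on the $\Scal$-rows of a policy, because for each $w\in\Wcal$ the kernel $\beta(w;\cdot)$ is supported in $\Scal$, so the inner sum in~\eqref{eq:realizable} truncates to $s\in\Scal$. Consequently, the map factors through the restriction $\pi\mapsto(\pi(s;\cdot))_{s\in\Scal}\in\prod_{s\in\Scal}\Delta_\Ascr$, and both the image dimension $d^{\Scal}$ and the whole question of realizability on $\Wcal$ are determined by this restricted polytope.

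Next I would introduce the fiber polytope
\[
F_P \; := \; \Bigl\{\rho \in \prod_{s\in\Scal}\Delta_\Ascr \; : \; \psi^\Scal(\rho)=P\Bigr\},
\]
which is nonempty by assumption on $P$ and compact as a closed subset of a product of simplices. The ambient polytope $\prod_{s\in\Scal}\Delta_\Ascr$ has affine dimension $|\Scal|(|\Ascr|-1)$ and the image of the restricted map has affine dimension $d^{\Scal}$, so $F_P$ has affine dimension at most $|\Scal|(|\Ascr|-1)-d^{\Scal}$. The core combinatorial step is then a standard linear-programming vertex count: at any vertex $\rho$ of $F_P$, the active inequality constraints $\rho(s;a)=0$ together with the defining equalities (one normalization per $s\in\Scal$, plus the codimension-$d^{\Scal}$ affine constraint $\psi^\Scal(\rho)=P$) must pin $\rho$ down uniquely in the ambient space $\R^{\Scal\times\Ascr}$. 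Hence at least $|\Scal|(|\Ascr|-1)-d^{\Scal}$ of the $|\Scal||\Ascr|$ entries vanish, leaving at most $|\Scal|+d^{\Scal}$ nonzero entries in the $\Scal$-rows of $\rho$.

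To finish, I would pick any vertex $\rho$ of $F_P$ (which exists since $F_P$ is a nonempty, compact polyhedron) and extend it to a full policy $\pi\in\Delta^\Sscr_\Ascr$ by choosing the rows outside $\Scal$ arbitrarily; the $\Scal$-rows of $\pi$ then carry at most $|\Scal|+d^{\Scal}$ nonzero entries. The hypothesis on $\Mcal$ supplies $\pi^\ast\in\overline{\Mcal}$ agreeing with $\pi$ on $\Scal$, and the factoring observation from the first paragraph gives $\psi^\Scal(\pi^\ast)=\psi^\Scal(\pi)=P$. Since $P$ was arbitrary, $\psi^\Scal(\overline{\Mcal})\supseteq\psi^\Scal(\Delta^\Sscr_\Ascr)$, and the reverse inclusion is trivial.

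I expect the only real pitfall to be the vertex counting argument: one must handle carefully the fact that the constraint $\psi^\Scal(\rho)=P$ is \emph{codimension} $d^{\Scal}$ (even if it is nominally written as many equalities in $\Delta_\Wscr^\Wcal$), which follows from $\dim\psi^\Scal(\Delta^\Sscr_\Ascr)=d^{\Scal}$. The clean split $|\Scal|+d^{\Scal}$ then has a transparent reading: $|\Scal|$ degrees of freedom come from the row-stochasticity constraints and $d^{\Scal}$ from the codimension of the fiber, matching the bound in the statement.
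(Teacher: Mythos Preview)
Your argument is correct. It reaches the same sparsity bound $|\Scal|+d^{\Scal}$ as the paper, but by a different route. The paper works on the \emph{domain} side: it invokes the geometric fact that a $d^{\Scal}$-dimensional affine image of a polytope is already covered by the images of its $d^{\Scal}$-dimensional faces, and then identifies the $d^{\Scal}$-faces of the product $\prod_{s\in\Scal}\Delta_\Ascr$ explicitly as the sets $\prod_{s}\Delta_{\Ascr_s}$ with $\sum_s(|\Ascr_s|-1)=d^{\Scal}$, i.e.\ with $|\Scal|+d^{\Scal}$ nonzero entries. You instead work on the \emph{fiber} side: for a fixed behavior $P$ you take a vertex of the preimage polytope $F_P$ and use the basic-feasible-solution bound from linear programming, so that the support of that vertex is at most the rank $|\Scal|+d^{\Scal}$ of the combined equality system (row normalizations plus the codimension-$d^{\Scal}$ constraint $\psi^{\Scal}(\rho)=P$). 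The two arguments are essentially dual---a vertex of $F_P$ is exactly a point where the fiber meets a face of $\prod_s\Delta_\Ascr$ of complementary dimension---but your LP phrasing has the mild advantage of not needing the face lattice of a product of simplices as a separate ingredient, while the paper's phrasing ties in more directly with the picture in Figure~\ref{fig:facesprojection}.
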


This lemma states that for universal approximation of embodied behaviors it suffices to approximate the policies which, 
	for a relevant set of sensor values, assign positive probability only to a limited number of actions. The number of actions is determined by the embodied behavior dimension. 

It is worthwhile mentioning that Example~\ref{example:mineua} and Lemma~\ref{proposition:euasupp} describe two complementary types of universal approximators of embodied behaviors. 
The first type, described in the example, is composed of maximum entropy policies, whereas the second type, described in the lemma, is composed of minimum entropy policies. 
If we consider the set of equivalent policies that map to a given behavior, 
Example~\ref{example:mineua} selects the one with the most random state-action assignments that are possible for generating that behavior. 
On the other hand, Lemma~\ref{proposition:euasupp} selects the ones with the most deterministic state-action assignments that are possible for generating that behavior. 
Geometrically, the set of equivalent policies of a given behavior is the convex hull of the minimum entropy policies, with the maximum entropy policy lying in the center. 
The exponential family has nice geometric properties, but it is very specific to the kernels $\beta$ and $\alpha$, which define the sufficient statistics. 
The set described in Lemma~\ref{proposition:euasupp} can also be considered as a policy model. It offers several advantages that we will exploit latter on. 
First, it has a very simple combinatorial description. Second, it only depends on the embodied behavior dimension $d$, irrespective of the specific kernels $\beta$ and $\alpha$ (which are not directly accessible to the agent). Third, it selects policies with the minimum possible number of positive-probability actions, 
which seems natural from a  concise controller.

\section{A Case Study with Conditional Restricted Boltzmann Machines}\label{section:univ_sub}\label{section:theory}

\subsection{Definitions}

A Boltzmann machine (BM) is an undirected stochastic network with binary units, some of which may be hidden. 
It defines probabilities for the joint states of its visible units, given by the relative frequencies at which these states are observed, asymptotically, depending on the network parameters (interaction weights and biases). 
The probability of each joint state $x=(x_V,x_H)$ of the network's visible and hidden units 
can be described by the Gibbs-Boltzmann distribution 
$p(x)=\frac{1}{Z} \exp(-\Hcal(x))$  
with energy function $\Hcal(x)=\sum_{i,j}x_i W_{ij} x_j +\sum_i b_i x_i$ and normalization partition function $Z(W,b)=\sum_{x'}\exp(-\Hcal(x'))$. 
The probabilities of the visible states are given by marginalizing out the states of the hidden units, 
$p(x_V)=\sum_{x_H}p(x_V,x_H)$.

\begin{figure}
\centering
\includegraphics[scale=1.1]{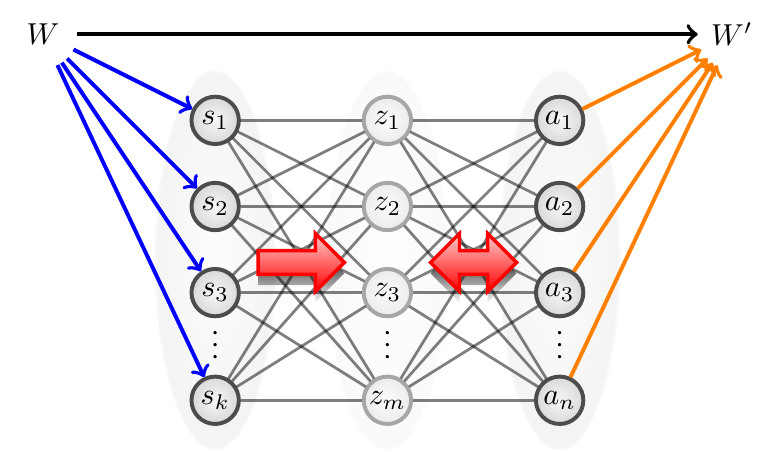}
	\caption{Illustration of a CRBM policy in the SML. 
		}
\label{fig:crbm}
\end{figure}

An RBM is a BM with the restriction that there are no interactions between the visible units nor between the hidden units, such that $W_{ij}\neq 0$ only when unit $i$ is visible and $j$ hidden. 
As any multivariate model of probability distributions, 
RBMs define models of conditional distributions, given by clamping the state of some of the visible units: 

\begin{definition}\label{definitionCRBM}
The conditional restricted Boltzmann machine model with $k$ input, $n$ output, and $m$ hidden units, denoted $\RBM_{n,m}^k$, is the set of all conditional distributions in 
$\Delta^\Sscr_\Ascr$, $\Sscr=\{0,1\}^k$, $\Ascr=\{0,1\}^n$,  
that can be written as 
\begin{multline*}
p(x|y) =  \frac{1}{Z(W,b,Vy+c)} \sum_{z\in\{0,1\}^m} \exp(z^\top W x + z^\top V y + b^\top x 
+ c^\top z),   
\quad \forall x\in\{0,1\}^n, y\in\{0,1\}^k, 
\label{definitioneq} 
\end{multline*}
where $Z(W,b,Vy+c)$ normalizes the probability distribution $p(\cdot|y)\in\Delta_\Ascr$, for each $y\in\{0,1\}^k$. 
Here, $y$, $x$, and $z$ are state vectors of the input, output, and hidden units, respectively. 
Furthermore, $V\in\R^{m\times k}$ is a matrix of interaction weights between hidden and input units, $W\in\R^{m\times n}$ is a matrix of interaction weights between hidden and output units, $c\in\R^m$ is a vector of biases for the hidden units, and $b\in\R^n$ is a vector of biases for the output units. Here $\top$ denotes vector transposition. 
\end{definition}

The model $\RBM_{n,m}^k$ has $m k + m n + m + n$ parameters (the interaction weights and biases). 
A bias term $a^\top y$ for the input units does not appear in the definition, as it would cancel out with the normalization function $Z(W,b,Vy+c)$. 
When there are no input units, i.e., $k=0$, the conditional probability model $\RBM_{n,m}^k$ reduces to the restricted Boltzmann machine probability model with $n$ visible and $m$ hidden units, which we denote by $\RBM_{n,m}$.

\subsection{Non-Embodied Universal Approximation}

In this section we ask for the minimal number of hidden units $m$ for which the model $\RBM_{n,m}^k$ can approximate every conditional distribution from the set $\Delta^\Sscr_\Ascr$ with $\Sscr =\{0,1\}^k$ and $\Ascr = \{0,1\}^n$, denoted  $\Delta^k_n$, arbitrarily well. 
Later we will contrast this non-embodied universal approximation with the embodied case. 

Note that each conditional distribution $p(x|y)$ can be identified with the set of joint distributions of the form $r(x,y)=q(y) p(x|y)$, with strictly positive marginals $q(y)$. 
By fixing $q(y)$ equal to the uniform distribution over $\Sscr$, 
we obtain an identification of $\Delta_\Ascr^\Sscr$ with $\frac{1}{|\Sscr|}\Delta_{\Ascr}^{\Sscr} \subseteq \Delta_{\Sscr\times\Ascr}$. 
In particular, we have that universal approximators of joint probability distributions define universal approximators of conditional distributions. 
This observation allows us to translate results on the representational power of RBMs to corresponding results for CRBMs. 
For example, we know that $\RBM_{k+n,m}$ is a universal approximator of probability distributions on $\{0,1\}^{k+n}$ whenever $m\geq \frac{1}{2} 2^{k+n} -1$~\citep[see][]{Montufar2011}, and therefore:  
\begin{proposition}
\label{proposition:universal}
The model $\RBM_{n,m}^k$ can approximate every conditional distribution from $\Delta_{n}^{k}$ 
arbitrarily well whenever $m\geq \frac{1}{2}2^{k+n}-1$. 
\end{proposition}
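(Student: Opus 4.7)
The strategy is to reduce the CRBM universal approximation problem to the already-known RBM universal approximation result via the embedding sketched immediately before the proposition. Given any target conditional distribution $p^\ast(\cdot|\cdot)\in\Delta_n^k$, the plan is to build the joint distribution $r^\ast(x,y):=\frac{1}{2^k}p^\ast(x|y)$ on $\{0,1\}^{k+n}$, which has uniform $y$-marginal. Since $m\geq \tfrac12 2^{k+n}-1$, the cited RBM universal approximation theorem of~\citet{Montufar2011} supplies an element of $\RBM_{k+n,m}$ that approximates $r^\ast$ arbitrarily well in total variation.

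The second step is to verify that conditioning this RBM on $y$ yields a distribution in $\RBM_{n,m}^k$. To do so, split the visible units of the RBM into a $y$-block (of size $k$) and an $x$-block (of size $n$), writing the visible bias as $(a,b)\in\R^k\times\R^n$ and the weights between visible and hidden units as a concatenation $[V\;W]$ with $V\in\R^{m\times k}$, $W\in\R^{m\times n}$. Conditioning on $y$ gives
\[
r(x|y) \;=\; \frac{\sum_z \exp(z^\top Wx + z^\top Vy + b^\top x + c^\top z)}{\sum_{x',z} \exp(z^\top Wx' + z^\top Vy + b^\top x' + c^\top z)},
\]
in which the $a^\top y$ term has cancelled between numerator and denominator. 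This is precisely the functional form of Definition~\ref{definitionCRBM}, so the conditional belongs to $\RBM_{n,m}^k$ with the same parameters $W,V,b,c$.

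The final step is to promote joint-distribution approximation to conditional-distribution approximation. Because the target marginal is constant, $r^\ast(y)=1/2^k$, any sufficiently close RBM approximation $r$ satisfies $r(y)\geq \tfrac{1}{2}\cdot 2^{-k}$ uniformly in $y$. Since $p(x|y)=r(x,y)/r(y)$ and the corresponding target conditional is $p^\ast(x|y)=r^\ast(x,y)/r^\ast(y)$, a short continuity argument shows that the pointwise (or total variation) distance between $p$ and $p^\ast$ is bounded by a constant multiple of the distance between $r$ and $r^\ast$, with constant depending only on the lower bound on the marginal. Thus the CRBM conditional approximates $p^\ast$ arbitrarily well.

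The argument is essentially a translation from the joint to the conditional setting, and I expect no real obstacle beyond carefully checking the cancellation of $a^\top y$ in the conditional and making the last continuity-of-division step rigorous (so that $\varepsilon$-closeness of joints implies $O(\varepsilon)$-closeness of conditionals, using that the target marginal is bounded below by $2^{-k}>0$).
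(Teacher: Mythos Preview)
Your proposal is correct and follows exactly the approach the paper takes: the paper states the proposition as an immediate corollary of the observation that conditional distributions embed into joint distributions via a uniform input marginal, together with the $\RBM_{k+n,m}$ universal approximation bound of \citet{Montufar2011}. You have simply spelled out in more detail the two points the paper leaves implicit---that clamping the $y$-block of an RBM yields the CRBM functional form (with $a^\top y$ cancelling), and that joint-distribution closeness transfers to conditional closeness because the target marginal is bounded below---both of which are routine.
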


Now, since conditional models do not need to model the input-state distributions, 
in principle it is possible that $\RBM_{n,m}^k$ is a universal approximator of conditional distributions even if $\RBM_{n+k,m}$ is not a universal approximator of probability distributions. 
Therefore, we also consider a result by~\cite{montufar2014CRBMs}, which improves Proposition~\ref{proposition:universal} and does not follow from corresponding results for RBM probability models: 
\begin{theorem}
\label{theorem:universal}
The model $\RBM_{n,m}^k$ can approximate every conditional distribution from $\Delta_{n}^{k}$ 
arbitrarily well whenever 
$m\geq \frac{1}{2} 2^{k} (2^n-1) = \frac{1}{2} 2^{k+n} - \frac{1}{2} 2^k$.  
\end{theorem}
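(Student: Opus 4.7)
The plan is to adapt the RBM universal approximation construction (which is the basis for Proposition~\ref{proposition:universal}) to the conditional setting in a way that exploits the fact that conditional models do not need to fit any marginal on the inputs. Concretely, I would identify each conditional distribution $p(x|y)\in \Delta_n^k$ with the joint distribution $r(y,x)=\tfrac{1}{2^k}p(x|y)$, which lies in the $2^k(2^n-1)$-dimensional sub-polytope of $\Delta_{k+n}$ consisting of joints with uniform input marginal. This is $2^k-1$ dimensions smaller than $\Delta_{k+n}$ itself, and the target is to convert this dimension saving into a saving of roughly $\tfrac{1}{2}2^k$ hidden units.

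The concrete tool is an \emph{input-selective} block construction. I would partition the $m$ hidden units into $2^k$ blocks $H_{y}$, one per input state $y\in\{0,1\}^k$, and choose $V$ and $c$ so that for each $j\in H_{y}$ the effective hidden bias $V_{j\cdot}y'+c_j$ tends to $-\infty$ for every $y'\neq y$ and takes an arbitrary, freely tunable value for $y'=y$. A unit driven to effective bias $-\infty$ contributes the factor $1+\exp(W_{j\cdot}x+V_{j\cdot}y'+c_j)\to 1$ to the unnormalized probability, so it does not influence the conditional $p(\cdot|y')$. Hence, for each fixed input $y$, the conditional $p(\cdot|y)$ is generated by the block $H_{y}$ acting as an ordinary RBM on $n$ output units, with output bias $b$ (shared across inputs but this is harmless since the RBM universal approximation result permits arbitrary choice of $b$) and with freely tunable hidden-output weights and hidden biases. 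Within each block we can therefore apply the RBM universal approximation construction on $\{0,1\}^n$ to realize any target $p(\cdot|y)$, provided $|H_y|$ is large enough. Summing $|H_y|$ across all $2^k$ input states gives the total $m$.

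The main obstacle is getting the constant right. A naive use of the $2^{n-1}-1$ bound per block would give $2^k(2^{n-1}-1)$, slightly different from the claimed $\tfrac{1}{2}2^k(2^n-1)$. To hit the stated bound exactly, I would not simply apply the block-RBM argument as a black box, but refine it following the chain-style construction used in RBM universal approximation proofs: each hidden unit introduces one new ``elementary'' component into a mixture of product distributions, and by cleverly sharing parameters one can cover all $2^n-1$ nontrivial directions of $\Delta_n$ using $\tfrac{1}{2}(2^n-1)$ effective units on average per input, which when summed over the $2^k$ inputs gives precisely $\tfrac{1}{2}2^k(2^n-1)$. The subtlety is that such an average fractional count is realized only when counts are pooled across input states, so the blocks are not all the same size and the proof requires a careful per-input bookkeeping, but no block is wasted because the $-\infty$ bias mechanism makes blocks act completely independently. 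Finally, since the $-\infty$ limits are achieved only asymptotically in the parameters, the statement is for closure/arbitrary approximation, not exact realization, matching the ``arbitrarily well'' phrasing of the theorem.
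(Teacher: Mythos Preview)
First, a framing remark: the paper does not actually prove this theorem. It is quoted from \cite{montufar2014CRBMs}, and the surrounding text explicitly says ``the full statement of the theorem is quite technical, and thus we refer the interested reader to'' that reference, adding that the true bound there is sharper still (the prefactor $\tfrac12$ drops toward $\tfrac14$ for large $k$). So there is no in-paper proof to match against; I can only assess your sketch on its own merits and against what the paper says about the cited argument.

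Your input-selective block construction is the natural strategy and is almost certainly close in spirit to the cited proof. But the sketch has two concrete problems.

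\emph{Arithmetic is backwards.} You write that the naive per-block count $2^k(2^{n-1}-1)$ is ``slightly different'' from the target $\tfrac12\,2^k(2^n-1)$ and then try to refine upward to hit the stated bound. In fact
\[
2^k(2^{n-1}-1)=\tfrac12\,2^{k+n}-2^k \;<\; \tfrac12\,2^{k+n}-\tfrac12\,2^k = \tfrac12\,2^k(2^n-1),
\]
so the naive block count is already \emph{smaller} than the theorem's bound. If the block argument were complete as you describe it, you would have proved a strictly stronger statement, and the ``refinement'' paragraph (averaging $\tfrac12(2^n-1)$ units per input, pooling fractional counts across blocks that you have just argued act completely independently) is both unnecessary and internally inconsistent.

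\emph{The shared output bias is the real gap.} The place where the argument actually needs work is precisely the step you dismiss: ``output bias $b$ shared across inputs but this is harmless since the RBM universal approximation result permits arbitrary choice of $b$''. The $2^{n-1}-1$ bound of \cite{Montufar2011} uses the freedom in $b$ as one of the $m{+}1$ building blocks of the construction; it is not a statement that holds for every fixed $b$. With $b$ clamped across all $2^k$ blocks, a direct reuse of that lemma costs you one extra hidden unit per block (to simulate a block-specific visible bias via a permanently-on hidden unit), which pushes the total to $2^k\cdot 2^{n-1}=\tfrac12\,2^{k+n}$, now \emph{larger} than the theorem's bound by $2^{k-1}$. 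Getting below that --- and, as the paper notes, eventually down toward prefactor $\tfrac14$ --- requires genuinely sharing hidden units across input states rather than treating the blocks as independent. Your sketch does not supply that mechanism; it only asserts it.
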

 
Theorem~\ref{theorem:universal} represents a substantial improvement of Proposition~\ref{proposition:universal} in that it reflects the structure of $\Delta_n^k$ as a $2^k$-fold Cartesian product of the $(2^n-1)$-dimensional probability simplex $\Delta_n$, in contrast to the proposition's bound, which rather reflects the structure of the $(2^{k+n}-1)$-dimensional joint probability simplex~$\Delta_{k+n}$. 
The full statement of the theorem is quite technical, and thus we refer the interested reader to~\citep{montufar2014CRBMs}. 
At this point let it suffice to say that some terms appearing in the bound on $m$ decrease with increasing $k$, 
such that approximately the prefactor $\tfrac{1}{2}$ decreases to $\frac{1}{4}$ when $k$ is large enough. 

As expected, the asymptotic behavior of this result is exponential in the number of input and output units. 
We believe that the result is reasonably tight, although some improvements may still be possible. 
A crude lower bound can be obtained by comparing the number of parameters with the dimension of the policy polytope~\cite[see details in][]{montufar2014CRBMs}: 

\begin{proposition}
\label{proposition:universallower}
If the model $\RBM_{n,m}^k$ can approximate every policy from $\Delta_{n}^{k}$ 
arbitrarily well, then necessarily 
$m \geq \frac{1}{{(n+k+1 )}}(2^k(2^n-1) -n)$. 
\end{proposition}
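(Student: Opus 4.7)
The plan is to prove this lower bound by a straightforward dimension count, comparing the dimension of the parameter space of $\RBM_{n,m}^k$ with the dimension of the target polytope $\Delta_n^k$.

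First I would count the number of parameters of $\RBM_{n,m}^k$: the weight matrices $W\in\R^{m\times n}$ and $V\in\R^{m\times k}$ contribute $mn+mk$ parameters, the hidden biases $c\in\R^m$ contribute $m$, and the visible biases $b\in\R^n$ contribute $n$, for a total of $m(n+k+1)+n$. Next I would note that, by definition, the set of conditional distributions representable by $\RBM_{n,m}^k$ is the image of the parametrization map $\R^{m(n+k+1)+n}\to \Delta_n^k$ sending $(W,V,c,b)$ to the conditional distribution given in Definition~\ref{definitionCRBM}. This map is real-analytic (in particular smooth and semi-algebraic), so its image is a semi-algebraic set of dimension at most $m(n+k+1)+n$, and the same upper bound holds for its topological closure.

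Next I would compute the dimension of the target $\Delta_n^k$. Since $\Delta_n^k$ is the Cartesian product of $2^k$ copies of the $(2^n-1)$-dimensional probability simplex $\Delta_n$ (one simplex for each input state $y\in\{0,1\}^k$), its dimension is $2^k(2^n-1)$. The universal approximation hypothesis says that the closure of the image of the CRBM parametrization contains $\Delta_n^k$, hence the inequality
\begin{equation*}
m(n+k+1) + n \; \geq \; 2^k(2^n-1).
\end{equation*}
Solving for $m$ yields $m\geq \frac{1}{n+k+1}(2^k(2^n-1)-n)$, as claimed.

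I do not anticipate a real obstacle here — the argument is essentially ``a low-dimensional smoothly parametrized family cannot be dense in a higher-dimensional set.'' The only point requiring some care is the justification that the closure of the image does not jump in dimension. This is standard for semi-algebraic sets (Tarski–Seidenberg and the fact that the closure of a semi-algebraic set of dimension $d$ is semi-algebraic of dimension $d$), and for the CRBM parametrization, which is polynomial in the exponentials of the parameters, the image is indeed semi-algebraic after the obvious change of variables. Alternatively, one could invoke Sard's theorem to conclude that the image has Lebesgue measure zero in any Euclidean neighborhood of dimension strictly greater than $m(n+k+1)+n$, contradicting density in an open subset of $\Delta_n^k$ whenever this number is less than $2^k(2^n-1)$.
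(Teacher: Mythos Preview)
Your proposal is correct and matches the paper's own approach: the paper does not give a full proof but simply states that the bound is obtained ``by comparing the number of parameters with the dimension of the policy polytope,'' citing \citet{montufar2014CRBMs} for details. Your parameter count $m(n+k+1)+n$ agrees with the one stated after Definition~\ref{definitionCRBM}, your dimension count $2^k(2^n-1)$ for $\Delta_n^k$ is exactly what the paper uses, and your care about the closure not increasing dimension (via semi-algebraicity or Sard) is the appropriate way to make the one-line heuristic rigorous.
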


\subsection{Embodied Universal Approximation}

By Theorem~\ref{proposition:euasupp}, 
we can achieve embodied universal approximation by approximating only policies with a limited number of non-zero entries. 
Furthermore, as mentioned earlier, if we only care about a subset $\Scal \subseteq\Sscr$ of sensor values, 
we can restrict the policy space $\Delta_\Ascr^\Sscr$ to $\Delta_\Ascr^\Scal$. 
This means that the number of policy entries that we need to model is given by the number of interesting sensor values plus the corresponding embodied behavior dimension. 
On the other hand, we can use each hidden unit of a CRBM to model each relevant non-zero entry of the policy. 

\begin{theorem}
\label{thm:restricted_embodied_universal_approximation}
The model $\RBM_{n,m}^k$ is an embodied universal approximator on $\Scal$ whenever $m \geq |\Scal| + d^{\Scal}-1$.
\end{theorem}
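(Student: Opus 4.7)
The plan is to combine Lemma~\ref{proposition:euasupp} with an explicit sharp-limit construction in which each non-zero entry of the policy is realized by a dedicated hidden unit, with one ``free'' entry absorbed by the bias and normalization. By Lemma~\ref{proposition:euasupp}, it suffices to show that every policy $\pi\in\Delta^\Sscr_\Ascr$ whose $\Scal$-rows have at most $N := |\Scal| + d^{\Scal} \leq m+1$ non-zero entries admits a $\pi^* \in \overline{\RBM_{n,m}^k}$ that coincides with $\pi$ on $\Scal$-rows. Let $T = \{(s,a)\in\Scal\times\Ascr \colon \pi(s;a)>0\}$, pick a distinguished pair $(s^*,a^*)\in T$, and assign each of the remaining $N-1$ pairs of $T$ to its own hidden unit; extra hidden units (when $m>N-1$) are made inert by sending their bias $c_j \to -\infty$, so the corresponding factor tends to $1$.

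For the hidden unit assigned to pair $(s_j,a_j)$, I would set $W_{j,\ell} = K(2a_{j,\ell}-1)$, $V_{j,\ell'} = K(2s_{j,\ell'}-1)$, and $c_j = M_j - K(\|a_j\|_1 + \|s_j\|_1)$, where $K>0$ is a sharpness parameter and $M_j\in\R$ is a free activation offset. A direct calculation shows $z_j(s,a) := W_j^\top a + V_j^\top s + c_j = M_j - K\bigl(d_H(a,a_j) + d_H(s,s_j)\bigr)$, where $d_H$ denotes Hamming distance. Hence as $K\to\infty$, the factor $1 + e^{z_j(s,a)}$ converges to $1 + e^{M_j}$ at $(s_j,a_j)$ and to $1$ at every other input-output configuration. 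Setting the output bias $b_\ell = B(2a^*_\ell-1)$ gives $b^\top a^* - b^\top x = B\,d_H(x,a^*) \geq B$ for all $x\neq a^*$. The resulting sharp-limit unnormalized conditional is
\[
\tilde p(x|s) = e^{b^\top x}\cdot\begin{cases} 1 + e^{M_{j(s,x)}}, & (s,x)\in T\setminus\{(s^*,a^*)\},\\ 1, & \text{otherwise},\end{cases}
\]
where $j(s,x)$ denotes the hidden unit assigned to the pair $(s,x)$.

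It remains to match on-support ratios and suppress off-support mass. For the row $s^*$, solving $e^{b^\top a_j}(1 + e^{M_j})/e^{b^\top a^*} = \pi(s^*;a_j)/\pi(s^*;a^*)$ determines a finite $M_j$ for each $(s^*,a_j)\in T$ with $a_j\neq a^*$, provided $B$ is large enough; the off-support mass in row $s^*$ is bounded by $2^n e^{b^\top a^* - B}$, hence vanishes relative to $\tilde p(a^*|s^*) = e^{b^\top a^*}$ as $B\to\infty$. For each row $s\neq s^*$, I would set $e^{b^\top a_j}(1 + e^{M_j}) = R_s\,\pi(s;a_j)$ for a common normalizer $R_s$ that is pushed to $+\infty$; since $\sum_j \pi(s;a_j)=1$, the on-support total equals $R_s$ and dominates the off-support mass $\sum_{x\notin A_s} e^{b^\top x} \leq 2^n e^{b^\top a^*}$ as $R_s\to\infty$. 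Taking the iterated limit $K\to\infty$, then $B\to\infty$, then $R_s\to\infty$ for each $s\neq s^*$, the CRBM conditional $p(x|s)$ converges on every $\Scal$-row to $\pi(s;x)$, yielding $\pi^* \in \overline{\RBM_{n,m}^k}$ and completing the proof via the lemma. The main technical delicacy is ordering these nested limits so that cross-contamination between hidden units (killed by $K\to\infty$), off-support mass in row $s^*$ (killed by $B\to\infty$), and off-support mass in each other row (killed by $R_s\to\infty$) all vanish without disturbing the per-row ratio constraints, which are exact in the sharp limit and hence preserved throughout.
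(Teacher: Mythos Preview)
Your high-level strategy matches the paper's exactly: invoke Lemma~\ref{proposition:euasupp} and then show that $\RBM_{n,m}^k$ can approximate any policy whose $\Scal$-rows have at most $m+1$ non-zero entries. The paper dispatches that second step in one line by citing the known fact that the joint model $\RBM_{k+n,m}$ approximates every distribution supported on at most $m+1$ points, and then conditions; your explicit one-hidden-unit-per-support-point construction is essentially a direct re-derivation of that cited result.

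There is, however, a genuine gap in the limit ordering you state. After the sharp limit $K\to\infty$, consider a row $s\neq s^*$ with $a^*\notin A_s$: the off-support mass at $x=a^*$ equals $e^{b^\top a^*}=e^{B\|a^*\|_1}$, which diverges as $B\to\infty$. With $R_s$ still finite, the normalized conditional $p(\cdot\,|s)$ therefore collapses to $\delta_{a^*}$ rather than to $\pi(s;\cdot)$. Even when $a^*\in A_s$, the equation $1+e^{M_j}=R_s\,\pi(s;a^*)\,e^{-b^\top a^*}$ has no solution once $e^{b^\top a^*}>R_s\,\pi(s;a^*)$, so the construction breaks down before you ever get to send $R_s\to\infty$.

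The fix is easy: either swap the last two limits (take $R_s\to\infty$ before $B\to\infty$), or pass to a single diagonal sequence with $R_{s,\nu}\,e^{-B_\nu\|a^*\|_1}\to\infty$ and $K_\nu-\max_j M_{j,\nu}\to\infty$. Alternatively, note that the CRBM family is invariant under bitwise XOR relabeling of the output (absorb the resulting affine shift into $W$, $b$, $c$), so one may assume $a^*=\mathbf{0}$ without loss of generality; then $e^{b^\top a^*}=1$ is bounded and your original order goes through verbatim.
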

\begin{proof}
	We use Lemma~\ref{proposition:euasupp}. 
	The joint probability model $\RBM_{k+n,m}$ can approximate any probability distribution with support of cardinality $m+1$ arbitrarily well~\citep[see][]{Montufar2011}. 
	Hence, with $m\geq |\Scal|+d^\Scal-1$, $\RBM_{n+k,m}$ can approximate any joint distribution with $|\Scal|+d^{\Scal}$ non-zero entries within the set $\{(s,a)\colon s\in\Scal, a\in\Ascr \}$ arbitrarily well. 
	These joint distributions represent a set $\Mcal$ of conditional distributions of $a$ given $s$ such that, 
	for any policy $\pi$ whose $\Scal$-rows have a total of $|\Scal|+d^\Scal$ or less non-zero entries, 
	there is a conditional distribution $\pi^\ast\in\Mcal$ with $\pi^\ast(\cdot|s) = \pi(\cdot|s)$ for all $s\in\Scal$. 
\end{proof}

This theorem gives a bound for the number of hidden units of CRBMs that suffices to obtain embodied universal approximation. The bound depends on the embodiment and behavioral constraints of the system, captured in the embodied behavior dimension $d^\Scal$. 
In general, this bound will be much smaller than the exponential bound from Theorem~\ref{theorem:universal}. 
We will test this result in the context of particular behavioral constraint on a hexapod in the next section.

\section{Experiments with a Hexapod}
\label{section:experiments}

In the previous sections we have derived a theoretical bound for the complexity of a CRBM based policy. In this section, we want to evaluate that bound experimentally. 
For this purpose, we chose a six-legged walking machine (hexapod) as our experimental platform (see Figure~\ref{fig:hexapod}), because it is a well-studied morphology in the context of artificial intelligence, with one of
its first appearances as Ghengis~\citep{Brooks1989aA-Robot}. 
The purpose of this section is \emph{not} to develop an optimal walking strategy for this system. 
Contrary, this morphology was chosen, because the tripod gait (see Figure~\ref{fig:hexapod}) is known to be one of the optimal locomotion behaviors, which can be implemented efficiently in various ways. 
This said, learning a control for this gait 
is not trivial, and hence, a good test bed to evaluate our complexity bound for CRBM based policies.

This section is organized in three parts. The first part presents the experimental set-up as far as is it required to understand the results. The second part describes how the CRBM complexity parameter $m$ was estimated form the data. The third part presents the results of the experiment and compares them with the theoretical bound. 

\subsection{Simulation} 
\label{sec:simulation} 
The hexapod was simulated with \textsf{YARS}~\citep{Zahedi2008aYARS:}, which is a
mobile robot simulator based on the bullet physics engine~\citep{Coumans2012aBullet}. 
Each segment of the hexapod is defined by its
physical properties (dimension, weight, etc.) and each actuator is defined by
its force, velocity and its angular range. In the case of the hexapod shown in
Figure~\ref{fig:hexapod} (left-hand side), the main body's dimension (bounding box)
is 4.4m length, 0.7m width, 0.5m height, and the weight is 2kg. 
Each leg consists of three
segments (femur, tarsus, tibia), of which the two lower segments (tarus, tibia)
are connected by a fixed joint. The leg segments were freely modeled with
respect to the dimensions of an insect leg. The actuator which connects the
femur and tarsus (knee actuator) only allows rotations around the local $y$-axis
of the femur segment (see Figure~\ref{fig:hexapod}). The maximal deviation for
the femur-tarsus actuator is limited to $\omega_\mathrm{fe-ta}\in[-15^\circ,
25^\circ]$. For actuators which connect the main body with the femur (ma-fe),
the maximal deviation is limited to $\omega_\mathrm{ma-fe}\in[\pm35^\circ]$. The
rotation axis of the ma-fe actuator is limited to the local $z$-axis of the main
body. In bullet, an actuator is defined by its impulse (set to $1\mathrm{Ns}$
for both actuators) and its maximal velocity (set to
$0.75\mathrm{rad/s}\approx42^{\circ}/s$ for both actuators). It must be noted
here, that the sensors and actuators are all mapped onto the interval $[-1,1]$,
which means that a sensor value of $\tilde{S}_i^t = 1$ refers to the maximal
current deviation of the corresponding joint. In the same sense, an
actuator value $\tilde{A}_j^t=1$ refers to the motor command to deviate the
corresponding joint to its maximal position. For simplicity, we refer to the two types of joints as shoulder (main body-femur) and knee (femur-tarus).

\begin{figure}[t]
	\begin{center}
		\begin{minipage}[c]{0.35\textwidth}
			\includegraphics[width=\textwidth]{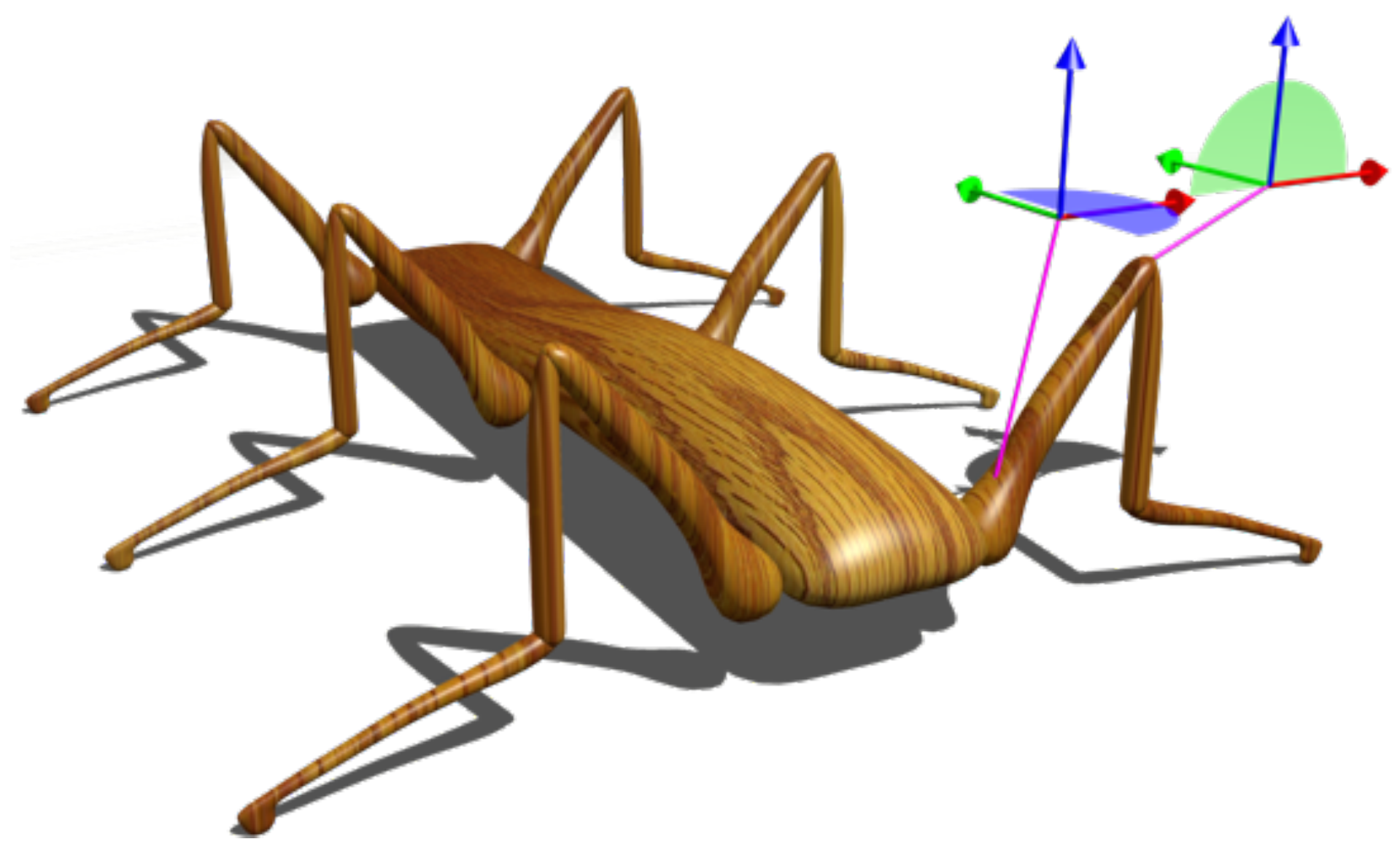}
		\end{minipage}
		\begin{minipage}[c]{0.64\textwidth}
			\includegraphics[width=\textwidth]{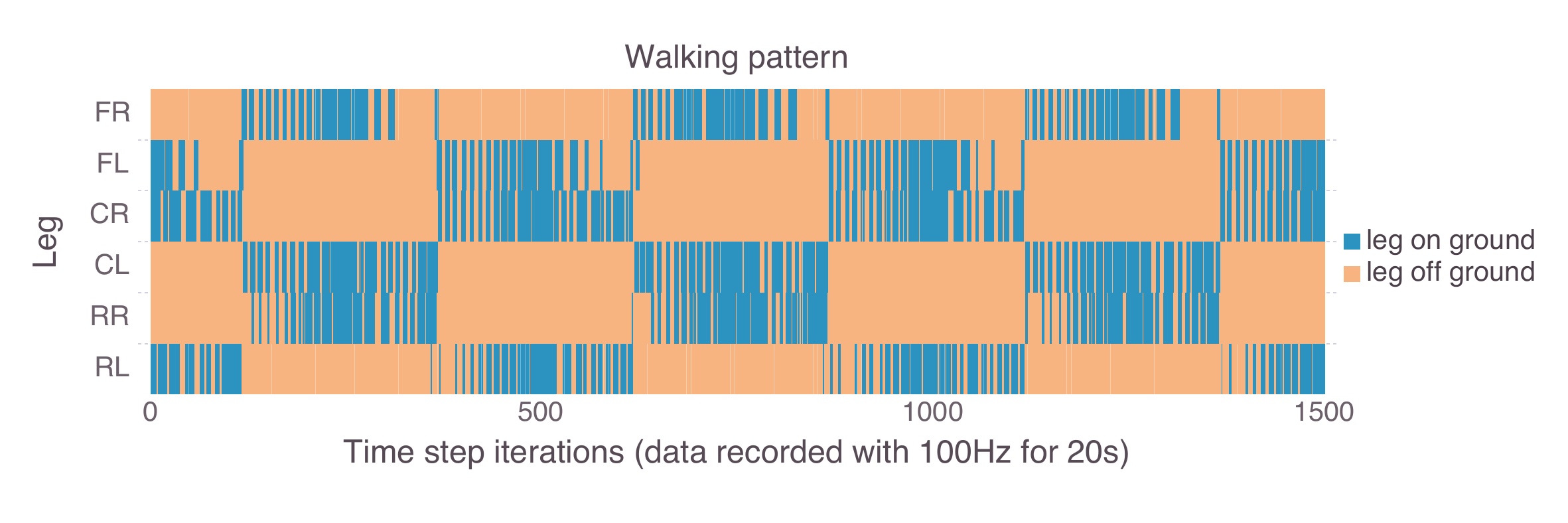}
		\end{minipage}
	\end{center}
	
	\caption{Hexapod set-up. Left-hand side: The simulated hexapod with a display of the joint configurations. For a detailed discussion, see Section~\ref{sec:simulation}. Right-hand side: Visualization of the target walking pattern. The plot shows which leg touched the ground at which point in time. Blue areas refer to a contact with a the ground, while orange areas refer to points in time during which the correspond leg did not touch the ground. The different legs are plotted over the $y$-axis, while each point on the $x$-axis refers to a single point in time.}
	\label{fig:hexapod}
\end{figure}

The policy update frequency was set to 10Hz, i.e.,~the controller received ten sensor values and generated ten actuator values per second. The target behavior of the hexapod (a tripod walking gait, see Figure~\ref{fig:hexapod}) was generated by an open-loop controller which applied phase shifted sinus oscillations to the actuators. For each leg, the sinus oscillations were discretized into 50 pairs of actuator values, which means that one  locomotion step requires 50 time steps (5 seconds) to complete. 

For the training and analysis, the sensor and actuator value data was
discretized into 16 equidistant bins for each sensor and actuator. This
corresponds to four binary input units for each sensor and four binary output
units for each actuator. Combined into two random variables
$S=(S_1, S_2, \ldots, S_{12}), A=(A_1, A_2, \ldots, A_{12})$,
 this leads to a total of
$16^{12}$ possible values ($|\Sscr| = |\Ascr| = 16^{12}$) corresponding to a total of $48$ binary input and $48$ binary output units. 
In the following sections, we only refer to this pre-processed data, which means that calculations and the training of the CRBMs described in the remainder of this section refer to the two random variables $S,A$. 

The next section will discuss the estimation of the sufficient controller complexity that is able to reproduce the desired tripod walking gait.

\subsection{Estimation of the Sufficient Complexity}
\label{sec:estimation of the required controller complexity}
Before the estimation procedure and results are presented, we restate the
inequality given in Theorem~\ref{thm:restricted_embodied_universal_approximation}, which is given by
\begin{eqnarray}
m & \geq |\Scal| + d^\Scal -1. \label{eq:m}
\end{eqnarray}
This means that a CRBM should not require more hidden units ($m$) than the sum of the support set cardinality $|\Scal|$ and embodied behavior dimension $d$ (minus 1). The following paragraphs explain how these two values were calculated from the recorded data.

The first step in estimating the sufficient controller complexity $m$ of the
CRBM policy model is the estimation of the support's cardinality $|\Scal|$. It
was mentioned above that there are $16^{12}$ possible sensor values. The
necessary complexity of a CRBM policy for a specific behavior depends on the
actually used number of sensor values, which is also known as the sensor support
set. By estimating the cardinality of the support set, we
know how many relevant sensor values the CRBM needs to handle to reproduce the
behavior of interest.

The estimation of the support set cardinality depends on the quality of the sample. 
Therefore, we sampled $10^5$ sensor values to ensure a sufficient convergence of the relative frequencies of the sensor values. 
The upper left plot in Figure~\ref{fig:hexapod results} shows the histogram for all recorded sensor values. 
The orange vertical line shows where we have pruned the data so that 80\% of the recorded data was kept. 
The lower left plot in Figure~\ref{fig:hexapod results} shows the remaining data. 
With this procedure (recording, estimating relative frequencies, pruning the
data to 80\%), we estimated the cardinality of the sensor support set at $|\Scal| = 63$. 
The pruning threshold of 80\% might appear arbitrary here. To clarify,
estimating the support from data is an interesting research topic by itself,
which, however, goes beyond the scope of this work. Our underlying assumption
for the pruning is that the sampling is noisy. We estimated the noise
empirically by analyzing the histogram and decided for a threshold that seemed
reasonable to us. We want to point out that the threshold was estimated before the
results of the experiments (see next section) were available.

The next step is to estimate the embodied behavior dimension, which is done here
based on the affine rank of the empirically estimated internal world model
$\gamma(s, a; s')$. For the sake of readability, we defer the justification for
the replacement of the embodiment-behavior dimension by the affine rank of the
internal world model to the appendix (see Appendix~\ref{sec:exp embodiment
dimension}). 

Given the internal world model $\gamma$, the affine rank is calculated in the following way:
\begin{equation}
d^{\Scal} = 
      \sum_{s\in\Scal} \rank( (\gamma(s,a_0 ; s') - \gamma(s, a ; s'))_{s'\in\Scal, a\in\Ascr} ) . \label{eq:affine rank}
\end{equation}
To estimate the internal world model $\gamma(s,a;s')$, we pruned the data in
accordance with the estimated support set cardinality. This means that we
removed all pairs of $S,A$ for which $S$ is not the in pruned support set
$\Scal$. For the remaining data, we counted the occurrences of $s^{t+1},s^t,a^t$
and filled the matrix $\gamma(s^t,a^t;s^{t+1})$. The matrix is initialized with
zero and each row is normalized by the row sum.  The resulting matrix does not
model a conditional probability distribution, 
because many rows have row sum zero. As we are only interested in the affine rank of the matrix, this is of no matter to us. The resulting value is $d^{\Scal} = 3$.

\paragraph{Resulting estimation of the model complexity:} It follows that the CRBM is able to represent the target behavior 
whenever the number of hidden units satisfies 
\begin{eqnarray}
m & \geq  |\Scal| + d^{\Scal} - 1 = 63 + 3 - 1= 65 .
\end{eqnarray}

\begin{figure}[t]
	\centering
	\begin{minipage}[c]{5cm}
		\includegraphics[width=\textwidth]{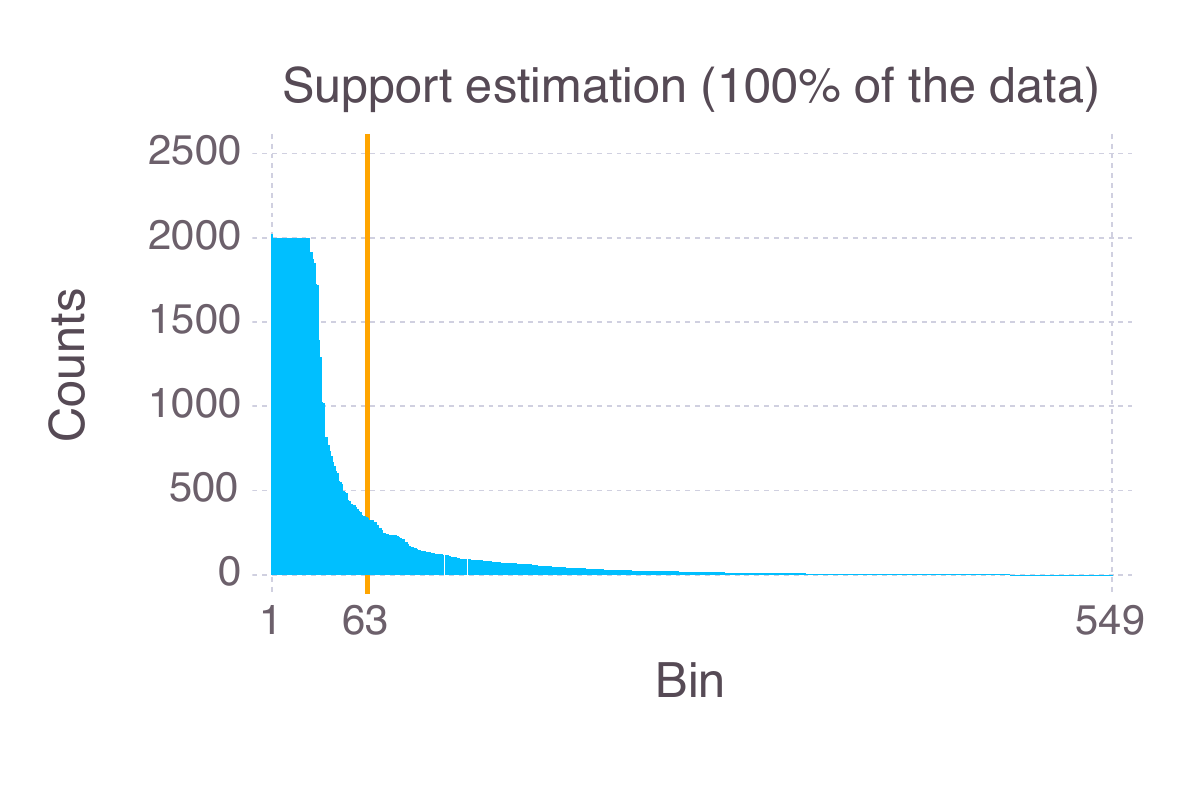}\\
		\includegraphics[width=\textwidth]{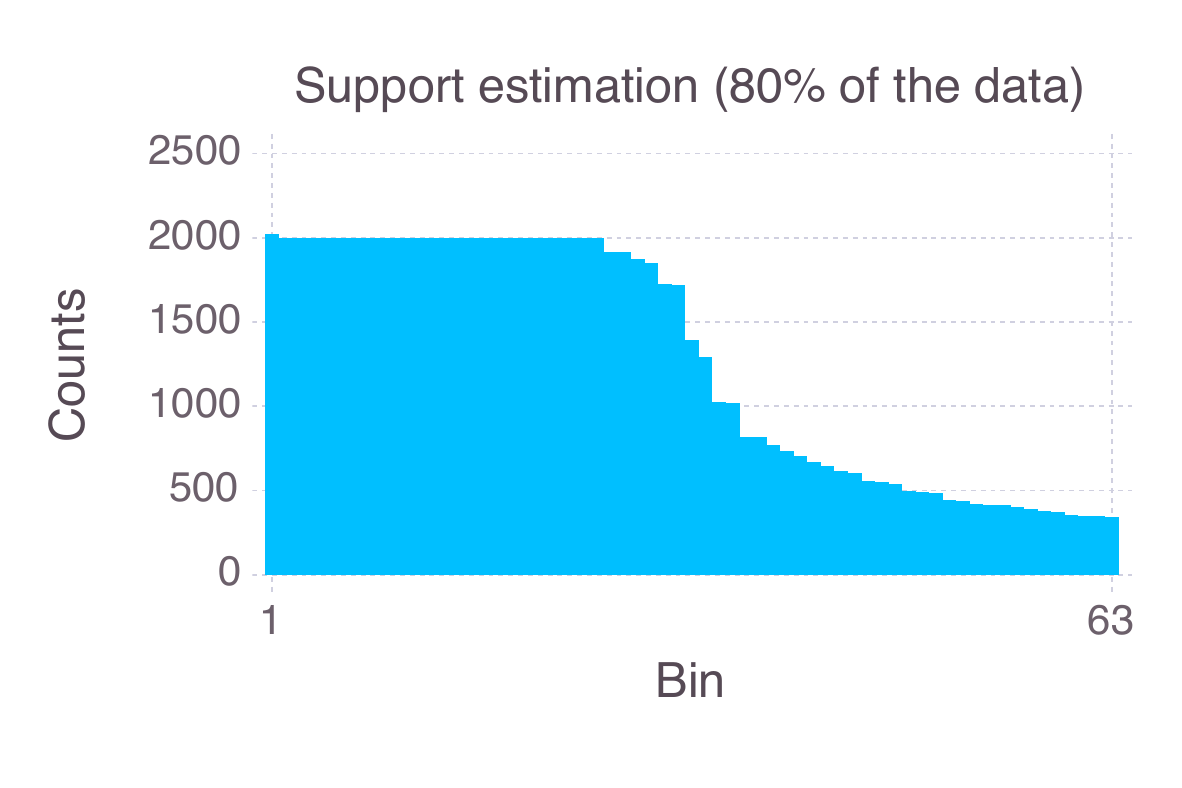}  
	\end{minipage}
	\begin{minipage}[c]{10cm}
		\includegraphics[width=\textwidth]{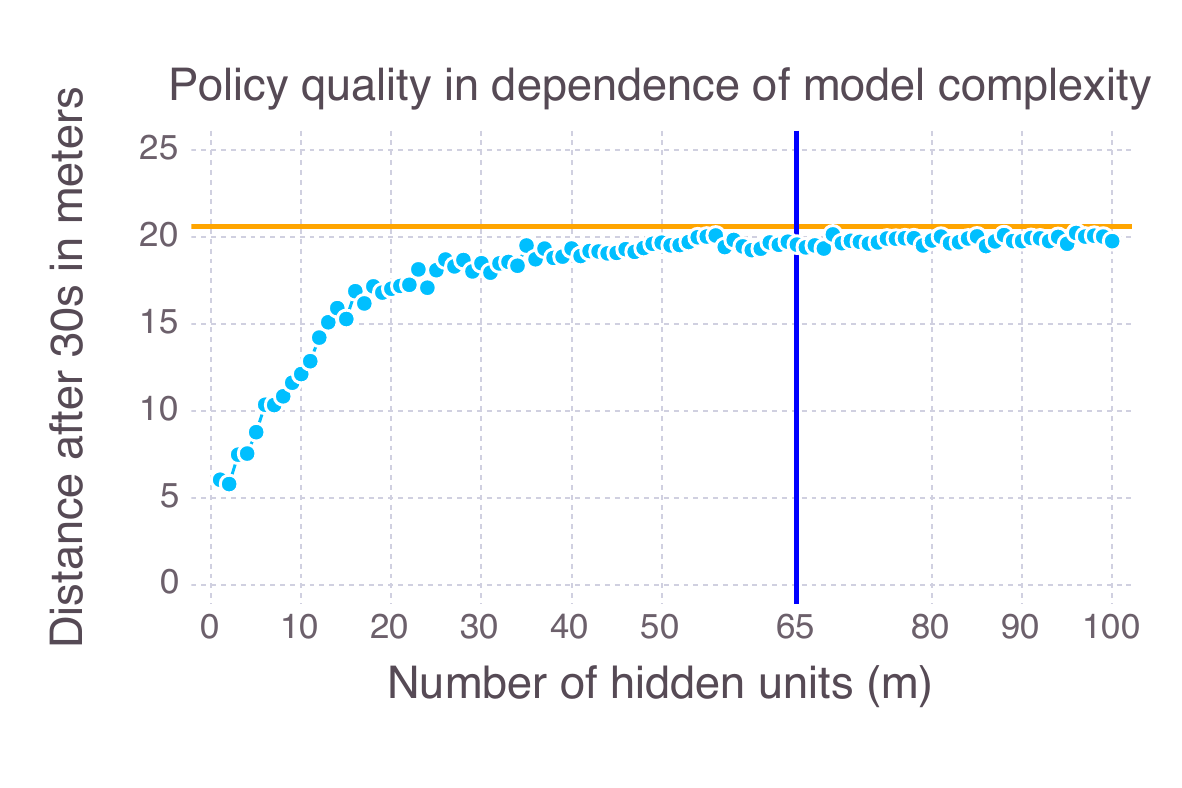}
	\end{minipage}
	\caption{Experimental results. Left-hand side: Estimation of the support set cardinality (before and after pruning). Right-hand side: Performance of the best CRBM for different complexity parameters $m$ in comparison to the performance of the target behavior (horizontal orange line). The vertical blue line indicates the $m$ estimated from the data (see Section~\ref{sec:estimation of the required controller complexity}).}
	\label{fig:hexapod results}
\end{figure}

To evaluate the tightness of this bound, we conducted a series of experiments, which are explained in the following section.
\subsection{Experiments to Evaluate the Tightness of the Complexity Estimation}
Before the experiments can be described, there is an important note to make.
This work is concerned with the minimally required complexity that is sufficient
to control an embodied agent, such that it is able to produce a set of
desirable behaviors (which also includes all behaviors as well as one specific
behavior). Here, we are \emph{not} concerned with the question how these
CRBMs should be trained optimally.
This is why we used a standard training algorithm for
RBMs~\citep{Hinton:2002:TPE:639729.639730,Hinton:practical} and conducted a
large scan over different complexity parameters $m$. For each
$m=1,2,3,\ldots,100$ we trained 100 CRBMs with the following learning
parameters: epochs = 20000, batch size = 50, learning rate $\alpha$ = 1.0,
momentum = 0.1, Gaussian distributed noise on sensor data = 0.01, weight cost = 0.001, bins = 16, CRBM update iterations = 10, on a data set of $10^4$ pairs of sensor and actuator values. 
Each trained CRBM was evaluated ten times, by applying it to the hexapod and
recording the covered distance for 30 seconds. The performance of the CRBMs is
measured against the target tripod walking gait, which achieves 20.6 meters
during the same time. As we are not concerned with the training of the CRBMs in
this work but instead with the best performance of a CRBM for a given $m$, we choose to take the maximally covered distance over all 1000 evaluations (100 CRBMs, 10 evaluation runs for each CRBM) to estimate the quality of a CRBM with a given complexity $m$. The plot on the right-hand side of Figure~\ref{fig:hexapod results} shows the resulting performance for all scanned values of $m$. The results show that our estimation is fairly tight, which means the performance of the CRBMs converges to the optimal behavior close to the estimated value of $m=65$.

\section{Conclusions}
\label{section:discussion}

We presented an approach for studying and implementing cheap design in the
context of embodied artificial intelligence. In this context, we referred to
cheap design as the reduction of the controller complexity that is possible
through an exploitation of the agent's body and environment. We developed a
theory to determine the minimal controller complexity that is sufficient to
generate a
given set of desired behaviors. Being more precise, we studied the way in which
embodiment constraints induce equivalent policies in the sense that they
generate the same observable behaviors. This led to the definition of the
effective dimension of an embodied system, the {\em embodied
behavior dimension}. In this way, we were able to define low-dimensional policy
models that can generate all possible behaviors. Such policy models are related
to the classical notion of universal approximation.

We used CRBMs as a platform of study, for which we presented non-trivial
universal approximation results in both the non-embodied and the embodied
settings. While the non-embodied universal approximation requires an enormous
number of hidden units (exponentially many in the number of input and output
units), embodied universal approximation can be achieved using essentially only
as many hidden units as the effective dimension of the system. Notably, our
construction depends only on the embodied behavior dimension and, therefore,
is independent of the specific embodiment constraints.

Experiments conducted on a walking machine demonstrate the tightness of
the estimated number of hidden units for a CRBM controller.
This shows the practical utility of our theoretical analysis.
To the best of our knowledge, the presented formalism and results 
are amongst the first quantitative contributions to cheap design in
embodied artificial intelligence.

\appendix

\subsection*{Appendix}

\section{Technical Proofs}
\label{app:proofs}
\begin{proof}[Details of Equation~\eqref{eq:upperbounddim}]
	In order to obtain the dimension of the image $\Beh=\psi(\Delta^\Sscr_\Ascr)$, 
	we first look at the affine space $\aff(\Delta^\Sscr_\Ascr)$. 
	A basis of this space can be obtained by building differences of vertices of $\Delta^\Sscr_\Ascr$. 
	The vertices are the deterministic policies $\pi^f(s;a) := \delta_{f(s)}(a)$, for all $s\in\Sscr$ and $a\in\Ascr$, each of which is characterized by a function $f\colon \Sscr\to\Ascr$. 
	We fix a deterministic policy $\pi^f$, with $f(s)=a_0$ for all $s$, for some $a_0\in\Ascr$, and consider the differences 
	$e_{(s,a)} : =\pi^{f} - \pi^{f_{(s,a)}}$ for all possible pairs $(s,a)$ with $a\neq a_0$, 
	where $f_{(s,a)}$ is the function that differs from $f$ only at $s$, where it takes value $f_{(s,a)}(s)=a$. 
	This set of $e_{(s,a)}$'s is a basis of $\aff(\Delta^\Sscr_\Ascr)$. 
	There are $|\Sscr|(|\Ascr|-1)$ of these vectors, which corresponds with the dimension of $\Delta^\Sscr_\Ascr$. 
	
	Now, the image $\Beh = \psi(\Delta^\Sscr_\Ascr)$ has the dimension of $\aff(\psi(\Delta^\Sscr_\Ascr) ) = \psi(\aff\Delta^\Sscr_\Ascr)$, which is the vector space spanned by 
	$\psi( e_{(s,a)}) = \beta(w;s)(\alpha(w,a_0;dw') - \alpha(w,a ; dw'))$ for all possible pairs $(s,a)$ with $a\neq a_0$. 
	Therefore, the dimension is given by the number of linearly independent $\psi(e_{(s,a)})$'s (the rank of the matrix with rows $e_{(s,a)}$). 
\end{proof}

\begin{proof}[Details of Example~\ref{example:mineua}]
	We consider an exponential family $\Ecal$ of probability distributions on the set $\Ascr^\Sscr$ of functions $f\colon \Sscr\to\Ascr$. 
	Let this exponential family be specified by the sufficient statistic $F:=\psi\circ \eta$, 
	where $\eta\colon \Ascr^\Sscr \to \Delta^\Sscr_\Ascr$;  $f \mapsto \pi^f$, $\pi^f(s;a):=\delta_{f(s)}(a)$ for all $s\in\Sscr$ and $a\in\Ascr$, 
	and $\psi$ is the policy-behavior map, represented by the matrix $E\in\mathbb{R}^{d\times (\Sscr\times\Ascr)}$. 
	Note that, given a basis of $\aff(\psi(\Delta^\Sscr_\Ascr))$, composed of $d$ vectors in $\aff(\psi(\Delta^\Sscr_\Ascr))$, 
	we can represent each $e_{(s,a)}$ and $\psi(\pi)$ with respect to this basis by a vector of length $d\leq |\Sscr|(|\Ascr|-1)$. 
	The exponential family $\Ecal$ consists of all probability distributions of the form 
	\begin{equation*}
	p_\theta(f) = \frac{ \exp( \theta^\top  F(f) ) }{ \sum_{f'} \exp(\theta^\top F(f') )} , \quad\text{for all $f\in \Ascr^\Sscr$, for all $\theta\in\R^{d}$}. 
	\end{equation*}
	The {\em moment map} $\mu$ maps probability distributions to the corresponding expectation value of the sufficient statistic, 
	\begin{equation*}
	\mu \colon \Delta_{\Ascr^\Sscr}\to \mathbb{R}^d;\; p \mapsto \sum_{f} F(f) p(f) = \sum_f \psi(\eta(f)) p(f)  = \psi(\sum_f \eta(f) p(f)) = \psi(\pi^p), 
	\end{equation*}
	where $\pi^p:= \eta(p) =\sum_f \pi^f p(f)\in \Delta^\Sscr_\Ascr $. 
	A key property of the moment map is that it maps the closure $\overline{\Ecal}$ of $\Ecal$ bijectively to the set $\mu(\Delta_{\Ascr^\Sscr})$ of all possible expectation values. 
	We have
	\begin{equation*}
	\psi(\eta(\overline{\Ecal}) ) = \mu (\overline{\Ecal}) = \mu (\Delta_{\Ascr^\Sscr}) =\psi(\Delta^\Sscr_\Ascr). 
	\end{equation*}
	Now we only need to show that the set $\eta( \Ecal)  = \{ \pi^p = \sum_f \pi^f p(f) \colon p\in\Ecal\}$ is contained in $\Ecal^\Sscr_\Ascr$. 
	That this is true can be seen from 
	\begin{align*}
	\pi^{p_{\theta}}(s;a) 
	=& \sum_f \pi^f(s;a) p_\theta(f) 
	= \sum_{f\colon f(s)=a} p_\theta(f) 
	= p_\theta(\{f\colon f(s)=a\}) 
	= \frac{\exp(\theta^\top E(s,a) )}{\sum_{a'}\exp(\theta^\top E(s,a' ) )}\\
	=& \pi_\theta(s;a), \quad\text{for all $a\in\Ascr$ and $s\in\Sscr$}, \quad\text{for all $\theta\in\R^d$}, 
	\end{align*}
	where we used 
	\begin{align*}
	p_\theta(f) 
	=& \frac{\exp( \theta^\top E (\pi^f))}{\sum_{f'}\exp( \theta^\top E(\pi^{f'}))} 
	= \frac{\exp( \theta^\top \sum_{s} E( s,f(s) ))}
	{\sum_{f'}\exp( \theta^\top \sum_{s'} E(s',f'(s')))} 
	= \prod_{s} \frac{\exp(\theta^\top E(s,f(s)) )}{\sum_a\exp(\theta^\top E(s,a ) )}\\
	=&\prod_{s} p_\theta(\{f'  \colon f'(s) =f(s) \}). 
	\end{align*}
	In fact, since $\mu$ is a bijection between $\overline{\Ecal}$ and $\mu(\Delta_{\Ascr^\Sscr})=\psi(\Delta^\Sscr_\Ascr)$, we have that $\psi$ is a bijection between $\overline{\Ecal^\Sscr_\Ascr} = \eta(\overline{\Ecal})$ and $\psi(\Delta^\Sscr_\Ascr)$. 
\end{proof}

\begin{proof}[Proof of Lemma~\ref{proposition:euasupp}]
	Assume first that $\Scal=\Sscr$. 	
		Geometrically, the policy-behavior map $\psi$ projects the policy polytope linearly into a polytope of dimension $d$. 
		A $d$-dimensional projection of a polytope is equal to the projection of its $d$-dimensional faces. 
		This implies that the result of applying any given policy from $\Delta_\Ascr^\Sscr$ can be achieved equally well by applying a policy from a $d$-dimensional face of $\Delta_\Ascr^\Sscr$. 
See Figure~\ref{fig:facesprojection} for an illustration of what we mean. 

		Now, the $d$-dimensional faces of the policy polytope $\Delta^\Sscr_\Ascr$ consist of those policies with at most $|\Sscr| + d$ non-zero entries. 
		The arguments for this are as follows. 
		The policy polytope is a product of simplices $\Delta_\Ascr^\Sscr=\times_{s\in\Sscr} \, \Delta_\Ascr$, 
		where the $s$-th factor corresponds to the set of all possible probability distributions $\pi(s;\cdot)$. 
		The faces of $\Delta_\Ascr^\Sscr$ are products of faces of its factors and have the form $\times_{s\in\Sscr} \Delta_{\Ascr_s}$,  
		where $\Ascr_s\subseteq\Ascr$ for all $s\in\Sscr$. 
		Each face of $\Delta_\Ascr^\Sscr$ corresponds to a choice of positions $\Ascr_s\subseteq\Ascr$ of the non-zero entries of $\pi(s;\cdot)$ for all $s\in\Sscr$. 
		The $d$-dimensional faces are those for which $\sum_{s\in\Sscr}(|\Ascr_s|-1) = d$, meaning that they consist of policies which have at most $\sum_{s\in\Sscr}|\Ascr_s| = |\Sscr| + d$ non-zero entries. 
		
	Consider now  $\Scal\subseteq\Sscr$. 
	The projection of the policy polytope by the $\mathcal{S}$ embodiment matrix can be regarded as 
	a composition which first projects $\Delta_\Ascr^\Sscr$ to $\Delta_{\Ascr}^{\mathcal{S}}$ and then projects $\Delta_{\Ascr}^{\mathcal{S}}$ by $\psi^\Scal$. 
	Now we can use the same arguments as above, with the difference that now only need to represent the $d^{\mathcal{S}}$-dimensional faces of the polytope $\Delta_\Ascr^{\mathcal{S}}$. 
\end{proof}

\begin{figure}
	\centering
	\includegraphics{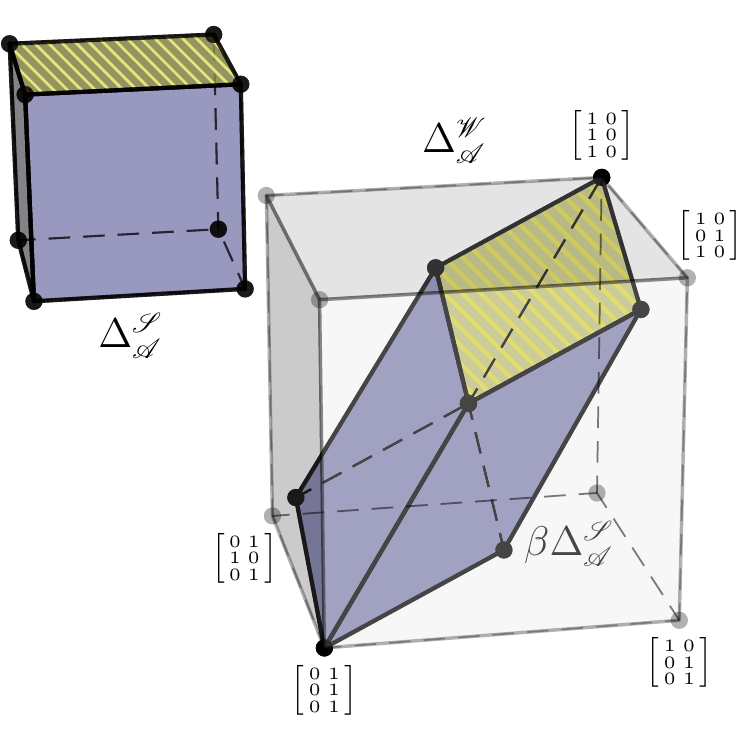}
	\caption{
		Illustration of the projection of the set of policies $\Delta^\Sscr_\Ascr$ into a lower dimensional polytope. 
        In this example $|\Sscr|=3$ and $|\Ascr|=2$, such that the policy polytope $\Delta^\Sscr_\Ascr$ is a cube. 
        The set of all kernels $\Delta_\Ascr^\Wscr$, $|\Wscr|=3$, is also a cube. 
        The set of all realizable kernels within that set, $p(w; a) = \sum_{s} \beta(w;s)\pi(s;a)$, for all choices of the policy $\pi$, is the projection of $\Delta_{\mathscr{A}}^{\mathscr{S}}$ by $\beta$. 
        In this example $\rank(\beta)=2$ and the projection of the policy polytope is two-dimensional polygon (the blue hexagon). 
        This projection by $\beta$ represents one part of the projection by the policy-behavior map. 
		The two-dimensional faces of the policy polytope (one of them highlighted in dashed yellow) have the same image as the entire policy polytope. 
	}
	\label{fig:facesprojection}
\end{figure}

\section{Estimation of the Embodied Behavior Dimension based on the Internal World Model}
\label{sec:exp embodiment dimension}
In many situations, the embodied behavior dimension is not available from a perspective
that is intrinsic to the agent, as the agent does not have direct access to the sensor kernel
$\beta$ nor to the world kernel $\alpha$. From that perspective, only an internal
version of the world model is accessible, which we refer to as {\em internal
	world model}. It is defined as a kernel $\gamma \in \Delta^{\mathscr{S} \times
	\mathscr{A}}_{\mathscr{S}}$, assigning to each sensor state $s$ with positive
probability and each actuator state $a$ the next sensor state $s'$, that is, 
\begin{eqnarray}
\gamma(s,a; s') & = & \int_{\Wscr} \left\{ \int_{\Wscr}  \beta(w'; s')\,  \alpha(w,a ; dw') \right\} {\Bbb P}(s; dw) \, , 
\label{eq:supp_delta}
\end{eqnarray}
where 
\[
{\Bbb P}(s; dw) \; := \;   \frac{ \beta(w; s) }{\int_{\Wscr}  \beta(w'' ; s ) \, {\Bbb P}(dw'')} \, {\Bbb P}(dw) \, .
\]
Note that the internal world model is not completely determined by $\beta$ and $\alpha$. It also depends on the distribution ${\Bbb P}(dw)$ 
of the world states $w$. If we choose this distribution to be a fixed reference distribution of world states, then the world model will be determined by $\alpha$ and $\beta$ only. However, in order to describe the actual distribution of world states, we have to take into account the contribution  
of the agent's policy $\pi$. This implies that, if the policy is subject to changes in terms of a learning process, then, in general, 
the world model will also be time dependent.  

On the other hand, $\gamma$ is the only information about world dynamics
that is intrinsically available to the agent. The extent 
to which $\gamma$ is not a good replacement for $\alpha$ depends on how much the
agent can ``see'' from the world with its sensors. If the agent has direct
access to the world state, that is $W^t = S^t$, then $\alpha$ and $\gamma$ coincide.
However, this is not very realistic. Generically, only partial observation of
the world is possible. Now the question arises whether it is possible to
determine the embodied behavior dimension $d$ in terms of $\gamma$
even in cases where the agent has only partial access to the world state. This
is indeed possible under specific conditions which are satisfied in our experimental
setup. We first present these conditions in general terms,
before we then relate them to our experiment at the end of this section. 
Let the world state $w$ consist of two parts $s$ and $r$, 
where $s$ is directly accessible to the agent and $r$ is the remaining part of the world, which is hidden to the agent. The situation is illustrated in~Figure~\ref{fig:specialstructure}.
\begin{figure}[h]
	\centering
		\includegraphics[scale=1.14]{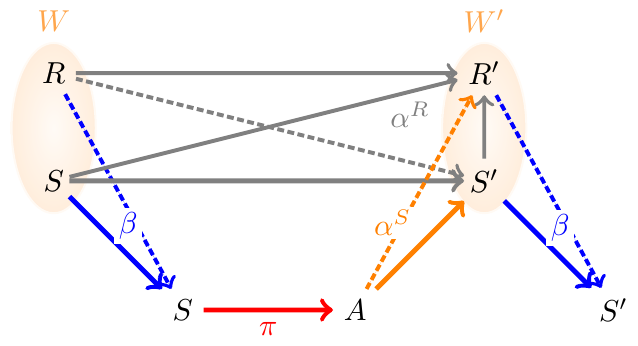}
	\caption{Special causal structure of the sensorimotor loop. 
		The dashed arrows are the ones that we omit within our assumptions. 
	}\label{fig:specialstructure}
\end{figure}

In this interpretation of the world state, the sensor
kernel $\beta$ is simply the identity map~$s \mapsto s$. Furthermore, this
interpretation sets structural constraints on the world transition kernel
$\alpha$, which assigns a probability distribution of the next world state $w' = (s',r')$
given the current world state $w = (s,r)$ and an actuator value $a$. As $r$ is
assumed to be hidden to the agent, $s'$ should not depend on $r$. This leads to
the following natural factorization of $\alpha$:
\begin{equation}
\alpha(s,r, a ; s', dr') \, = \, \alpha^{S}(s,a ; s') \cdot \alpha^R(r,s,s'; d r').
\end{equation}                 
With this assumption, we obtain as internal world model
\begin{equation}
\gamma( s,a ; s') \, = \, \alpha^S(s , a; s'), \qquad \mbox{whenever ${\Bbb P}(s) > 0$},    
\end{equation}
and the following transition probabilities from a world state $w = (s,r)$ with positive probability to a
world states $w' = (s',r')$:
\begin{eqnarray}
{\Bbb P}^\pi(s,r ; s', dr') 
& = & \sum_{{s}'', {a}} \beta(s,r ; s'') \, \pi( {s}'' ; a) \, \alpha(s,r, a ; s', dr') \nonumber\\
& = & \sum_{{a}}  \pi(s ; {a}) \, \alpha^{S}(s,{a} ; s' ) \, \alpha^R(r,s,s' ; dr') \\
& = & \alpha^R( r,s,s' ; dr')  \underbrace{\sum_{{a}}  \pi(s ; a) \, \alpha^S ( s,{a} ; s' )}_{=: \; {\Bbb Q}^\pi(s ; s')}.\nonumber
\end{eqnarray}
This shows that  ${\Bbb P}^\pi =  {\Bbb P}^{\pi^\ast}$ if and only if 
${\Bbb Q}^{\pi} = {\Bbb Q}^{\pi^\ast}$, and therefore the embodied behavior dimension is given by the dimension
of the image of $\pi \mapsto {\Bbb Q}^\pi$. This is given by the affine rank of the
kernel $\alpha^S$, which coincides with the affine rank of $\gamma$:  
\begin{equation}
d = 
      \sum_s \rank( (\gamma(s,a_0 ; s') - \gamma(s, a ; s'))_{s'\in\Sscr, a\in\Ascr} ),  \label{eq:affine rank}
\end{equation}
where $a_0$ is any fixed value in $\Ascr$, for all~$s$. 

This applies to our hexapod experiment discussed in
Section~\ref{section:experiments} for the following reason.
In the special case of the tripod gait of a hexapod on an even and otherwise
featureless plane, the next joint angles $S^{t+1}$ are only determined by the
current joint angles $S^t$ and the current action $A^t$. The rest of the world,
here denoted by $R$, contains information such the contact points of the legs
with the ground.
This information is carried from one time step to the next, as it determines how the
hexapod walks along the plane. Nevertheless, the contact points of the legs do not influence
the joint angles. Hence, in our experiment, $S'$ is conditionally independent of
$R$ given $S$ and $A$. Furthermore, $R'$ is conditionally independent of
$A$ given $R$, $S$ and $S'$ as the contact points of the
legs with the ground are only determined by the relative joint angles, and not by
the current action. Therefore, we can estimate the embodied behavior dimension by the
rank of the internal world model.

\section{Generalizations}
\label{sec:generalizations}

\subsection{SML with Internal State}

In the main part of the paper we considered reactive SMLs. 
In a more general setting, the agent may be equipped 
with some sort of memory or internal representation of the world. 
In this case, besides from the world state, the sensor state, and the actuator state, the SML also includes an internal state variable. 
As in the reactive SML, the dynamics of these variables are governed by Markov transition kernels, but the causality structure is slightly different. 
Let $\mathscr{W}$, $\mathscr{S}$, $\mathscr{C}$, and $\mathscr{A}$ denote the sets of possible states of the world, the sensors, the internal state, and the actuators. 
Then the Markov kernels are 
\begin{equation}
\def\arraystretch{1.5}
\begin{array}{l c r}
\beta      \colon \mathscr{W}  &\to& \Delta_{\Sscr}, \\ 
\varphi   \colon \mathscr{C}  \times \mathscr{S} &\to& \Delta_{\Cscr}, \\
\pi          \colon \mathscr{C}  &\to& \Delta_{\Ascr}, \\
\alpha  \colon \mathscr{W} \times \mathscr{A} &\to& \Delta_{\Wscr}. 
\end{array}\label{eq:kernels}
\end{equation}
See Figure~\ref{figure:SMLin} for an illustration of this causality structure. 

\begin{figure}[h]
	\centering 
	\includegraphics[scale=1.14]{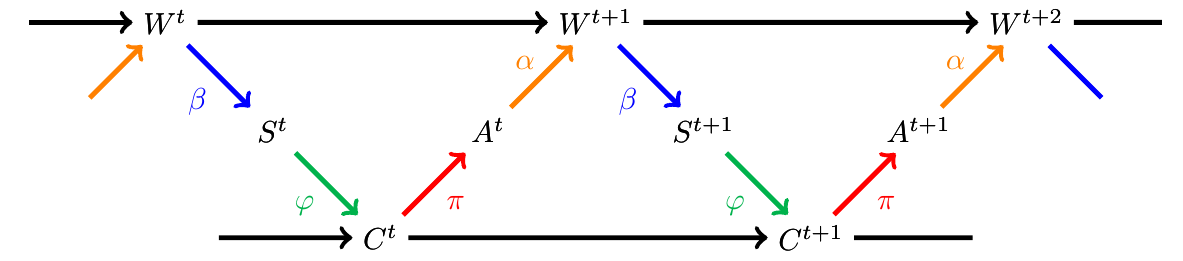}
\caption{Causal structure of a SML with internal state. Here $W^t, S^t, C^t, A^t$ are the states of the world, sensors, internal variable, and actuators at the discrete time $t$. }
	\label{figure:SMLin}
\end{figure}

As in the reactive case discussed in the main text, here we also want to consider the (extrinsic) behavior of the agent, which is described in terms of the stochastic process of world states. 
In this case, however, we condition these processes not only on an initial world state but also on an initial internal state. 
The difficulty arising here is that, in general, in the presence of an internal state the stochastic process over the world states is not Markovian. 
The world state transition at each time step does not only depend on the previous world state but it also depends on a longer history, encoded in the internal state. 

For example, when navigating a territory, a robot endowed with an internal state could operate in the following way. 
If at a given time step the robot detects an obstacle ahead, $s =\text{``obstacle''}$, 
then, in conjunction with a current internal state $c =\text{``safe''}$, the new internal state could become $c'=\text{``attentive''}$, in which case the policy would choose $a =\text{``maintain direction''}$. 
However, if the current internal state was $c=\text{``attentive''}$, 
the new internal state could become $c'=\text{``alert''}$, 
in which case the policy would choose between the actions $a'=\text{``turn left''}$ and $a'=\text{``turn right''}$ with probability $\tfrac12$. 
This example shows that the internal state may contain information about the history of world and sensor states, 
which is not available from the current world and sensor states alone. 

Nonetheless, for any fixed choice of the kernels $\beta, \varphi, \pi, \alpha$ and a starting value $ (w^0, c^0)$ at time $t=0$, the SML defines a (discrete-time homogeneous) Markov chain with state space $\Wscr \times \Sscr \times \Cscr \times \Ascr$. 
The transition probabilities of this chain are given by  
\begin{equation*}
\mathbb{P}^\pi(w^0, s^0, c^0, a^0 ;  dw^1, s^1, c^1, a^1) = 
\alpha(w^0,a^0; dw^1) \beta(w^1; ds^1)\varphi(c^0,s^1 ; dc^1) \pi(c^1; da^1) .
\end{equation*}
Furthermore, the process with state space $\Wscr\times\Cscr$ is also Markovian.  
The transition probabilities of this chain 
are given by 
\begin{equation*}
\psi(w,c ; dw', dc') = 
\int_{\Sscr'} \int_\Ascr  \pi(c; da)\alpha(w,a; dw')\beta(w'; ds')\varphi(c,s'; dc') . 
\end{equation*}

The process on $\Wscr$ (extrinsic behavior) is the marginal of the process on $\Wscr\times\Cscr$ (extrinsic-intrinsic behavior).  
We can study some properties of the extrinsic behavior in terms of the properties of the extrinsic-intrinsic behavior.  
The latter is easier to analyze, since it is Markovian. In particular, we can study it in the same way we studied the extrinsic behavior in the reactive SML. 

More explicitly we have the following.  
Writing $\xi(w,c; dw') = \int_{\Ascr} \pi(c; da) \alpha(w,a; dw')$ and $\phi(w',c; dc') = \int_{\Sscr'}\beta(w';ds') \varphi(c,s';dc')$, the transition probabilities for the process on $\Wscr\times\Cscr$ are given by 
\begin{equation}
\psi(w,c ; dw', dc') = 
\xi(w,c; dw') \phi(w',c; dc'). 
\end{equation}
For each $(w,c)$, the probability distribution $\xi(w,c;\cdot)\in\Delta_\Wscr$ is the projection of $\pi(c;\cdot)\in\Delta_\Ascr$ by the linear map defined by $\alpha(w,\cdot;\cdot)$. 
If the intersection of the null-spaces of $\alpha(w,\cdot;\cdot)$ for all $w$ has a positive dimension, 
then there is a positive dimensional set of policies $\pi$ that are mapped to the same $\xi$ and hence to the same behavior. 
In order to obtain that behavior, it is sufficient to represent one of the policies that map to $\xi$, in contrast to the potentially much larger set of all policies that map to the same $\xi$. 
A similar observation applies to $\varphi$. 
This shows that already when considering the process on $\Wscr\times\Cscr$ (the combined extrinsic-intrinsic behavior of the agent), many policies may be identified. Embodiment constraints restrict the possible behaviors. 
When considering only the process over $\Wscr$ (the extrinsic behavior of the agent), many more policies may be identified with the same behavior. The detailed study of projections from combined behaviors to extrinsic behaviors is left for future work. 

\subsection{Continuous Sensor and Actuator State Spaces}
We have considered systems where $\Sscr$ and $\Ascr$ are finite sets. 
In some case it can be more natural to consider continuous sensor and actuator spaces. 
The continuous case brings some subtleties with it. 
In particular, the set of policies with continuous state spaces is infinite dimensional. 
In this case one has to depart from linear algebra and use functional analysis. 
Furthermore, in the setting of continuous sensor and actuator spaces usually it is not possible to achieve universal approximation by one fixed model. 
Rather, one says that a class of models has the universal approximation property, meaning that 
for each given error tolerance, there is a model in that class, that can approximate to within that error tolerance. 
Nonetheless, one can measure the approximation performance in terms of the (finite) number of parameters or hidden variables that a model needs in order to satisfy a given error tolerance. 
Continuous policy models can be defined in terms of stochastic feedforward neural networks with continuous variables or also in terms of CRBMs with Gaussian output units. 
Here, the complexity of a model can be measured in terms of the number of hidden variables.

\subsubsection*{Acknowledgment}
We would like to acknowledge support for this project from the DFG Priority Program Autonomous Learning (DFG-SPP 1527). 
G.~M. and K.~G.-Z. would like to thank the Santa Fe Institute for hosting them during the work on this article. 

\bibliographystyle{abbrvnat}
\bibliography{referenzen}{}

\end{document}